\title{Distance Oracles for Interval Graphs via Breadth-First Rank/Select in Succinct Trees}
\titlerunning{Distance Oracles for Interval Graphs via Breadth-First Rank/Select in Trees}
\author{Meng He}{Dalhousie University, Canada}{mhe@cs.dal.ca}{https://orcid.org/0000-0003-0358-7102}{}
\author{J.\ Ian Munro}{University of Waterloo, Canada}{imunro@uwaterloo.ca}{https://orcid.org/0000-0002-7165-7988}{NSERC and the Canada Research Chairs Programme}
\author{Yakov Nekrich}{Michigan Tech, USA}{yakov@mtu.edu}{}{}
\author{Sebastian Wild}{University of Liverpool, UK}{wild@liverpool.ac.uk}{https://orcid.org/0000-0002-6061-9177}{}
\author{Kaiyu Wu}{University of Waterloo, Canada}{k29wu@uwaterloo.ca}{https://orcid.org/0000-0001-7562-1336}{}
\authorrunning{M. He, J.\,I. Munro, Y. Nekrich, S. Wild, K. Wu}
\keywords{succinct data structures, distance oracles, ordinal tree, level order, breadth-first order, interval graphs, proper interval graphs, succinct graph representation}
\begin{document}

\maketitle

\begin{abstract}
We present the first succinct distance oracles for (unweighted) interval graphs and 
related classes of graphs, using a novel succinct data structure for ordinal trees
that supports the mapping between preorder (\ie, depth-first) ranks and 
level-order (breadth-first) ranks of nodes in constant time.
Our distance oracles for interval graphs also support navigation queries~-- 
testing adjacency, computing node degrees, neighborhoods, and shortest paths~-- 
all in optimal time. Our technique also yields
optimal distance oracles for proper interval graphs (unit-interval graphs)
and circular-arc graphs.
Our tree data structure supports all operations provided by 
different approaches in previous work, 
as well as mapping to and from level-order ranks and 
retrieving the last (first) internal node before (after) a given node 
in a level-order traversal, all in constant time.
\end{abstract}

%

%
\section{Introduction}
\label{sec:intro}

As a result of the rapid growth of electronic data sets, 
memory requirements become a bottleneck in many applications 
as performance usually drops dramatically as soon as data structures 
do no longer fit into faster levels of the memory hierarchy in computer systems.
Research on \emph{succinct data structures} 
has lead to optimal-space data structures for many types of data~\cite{Navarro2016}.

Graphs are one the most widely used types of data.
In this paper, we study \emph{succinct distance oracles}, \ie,
data structures that efficiently compute the length of a shortest path between
two nodes, for interval graphs and related classes of graphs.
Interval graphs are the intersection graphs of intervals on the real line and 
have applications in operations research~\cite{byfns2000} 
and bioinformatics~\cite{zsfgwkb1994}.
Distance oracles are widely studied; for an overview of the extensive literature 
see~\cite{Sommer2014,z2001,tz2005,pr2014}.

Our distance oracles make fundamental use of (rooted) \emph{trees}.
Standard pointer-based representations of trees use $O(n)$ words or $O(n\log n)$ 
bits to represent a tree on $n$ nodes, but as the culmination of extensive work~\cite{j1989,cm1996,MunroRaman2001,mrr2001,cll2005,mr2004,ly2006,s2007,NavarroSadakane2014,bdmrrr2005,JanssonSadakaneSung2012,bhmr2011,GearyRamanRaman2006,HeMunroRao2012,frr2009}, ordinal trees can be represented \emph{succinctly}, \ie,
using the optimal $2n+o(n)$ bits of space, while supporting a plethora of 
navigational operations in constant time 
(on a word-RAM, which we assume throughout this paper); cf.~\wref{tab:operations}.
One operation that has gained some notoriety for not being supported by any of 
these data structures is 
mapping between \emph{preorder} (\ie, depth-first) ranks and 
\emph{level-order} (breadth-first) ranks of nodes.
Known approaches to represent trees are either fundamentally breadth first~--
like the \textsl{level-order unary degree sequence} ($\LOUDS$)~\cite{j1989}~--
and very limited in terms of supported operation,
or they are depth first~-- like 
the \textsl{depth-first unary degree sequence} (\DFUDS)~\cite{bdmrrr2005},
the \textsl{balanced-parentheses} (\BP) encoding~\cite{MunroRaman2001}
and \textsl{tree covering} (\TC)~\cite{GearyRamanRaman2006}~--
and do not support level-order ranks,
(see \wref{sec:why-level-order-is-hard} for more discussion).

In this paper, we present a new tree data structure that bridges the dichotomy,
solving an open problem of~\cite{HeMunroRao2012}.
Our tree data structure is based on a novel way to (recursively) decompose 
a tree into \emph{forests}
of subtrees that makes computing level-order information possible.
We describe how to support all operations of previous \TC data structures
based on our new decomposition.

Supporting the mapping to and from level-order ranks 
was the missing keystone for our succinct distance oracles for interval graphs,
and our tree data structure will likely be of independent interest
as a building block for future work.

\subparagraph*{Our Results on Trees.} 
Our first result is a succinct representation of ordinal trees which occupies $2n+o(n)$ bits and supports all operations listed in \wref{tab:operations} in $O(1)$ time,
that is, all operations supported by previous work plus these new operations:
\begin{table}[tbh]
	\plaincenter{\fbox{%
	\begin{minipage}{.97\textwidth}\footnotesize
		\newcommand\opitem[2]{\footnotesize#1 &\footnotesize #2 \\[.2ex]}
		\begin{tabular}{p{8.5em}p{33em}}
		\opitem{$\TrParent(v)$}{the parent of $v$, same as $\TrLevAnc(v,1)$}
		\opitem{$\TrDeg(v)$}{the number of children of $v$}
		\opitem{$\TrChild(v,i)$}{the $i$th child of node $v$ ($i\in\{1,\ldots,\TrDeg(v)\}$)}
		\opitem{$\TrChildRank(v)$}{the number of siblings to the left of node $v$ plus $1$}
		\opitem{$\TrDepth(v)$}{the depth of $v$, \ie, the number of edges between the root and $v$}
		\opitem{$\TrLevAnc(v,i)$}{the ancestor of node $v$ at depth $\TrDepth(v)-i$} 
		\opitem{$\TrNbDesc(v)$}{the number of descendants of $v$}
		\opitem{$\TrHeight(v)$}{the height of the subtree rooted at node $v$}
		\opitem{$\LCA(v, u)$}{the lowest common ancestor of nodes $u$ and $v$}
		\opitem{$\TrLeftLeaf(v)$}{the leftmost leaf descendant of $v$}
		\opitem{$\TrRightLeaf(v)$}{the rightmost leaf descendant of $v$}
		\opitem{$\TrLevelLeft(\ell)$}{the leftmost node on level $\ell$}
		\opitem{$\TrLevelRight(\ell)$}{the rightmost node on level $\ell$}
		\opitem{$\TrLevelPred(v)$}{the node immediately to the left of $v$ on the same level}
		\opitem{$\TrLevelSucc(v)$}{the node immediately to the right of $v$ on the same level}
		\opitem{$\TrPrevInt(v)$}{the last internal node before $v$ in a level-order traversal}
		\opitem{$\TrNextInt(v)$}{the first internal node after $v$ in a level-order traversal}
		\opitem{$\TrRank_{X}(v)$}{the position of $v$ in the $X$-order, $X \in \{\pre,\post,\inorder,\DFUDS,\lvl\}$, \ie, in a preorder, postorder, inorder, {\DFUDS} order, or level-order traversal of the tree}
		\opitem{$\TrSelect_{X}(i)$}{the $i$th node in the $X$-order, $X \in \{\pre,\post,\inorder,\DFUDS,\lvl\}$}
		\opitem{$\TrLeafRank(v)$}{the number of leaves before and including $v$ in preorder}
		\opitem{$\TrLeafSel(i)$}{the $i$th leaf in preorder}
	\end{tabular}
	\end{minipage}}}
	\smallskip
	\caption{%
		Navigational operations on succinct ordinal trees. 
		($v$ denotes a node and $i$ an integer).
	}
	\label{tab:operations}
\end{table}

\begin{itemize}
	\item $\TrRank_{\lvl}(v)$ and $\TrSelect_{\lvl}(i)$: 
	computing the position of node $v$ in a level-order traversal of the tree resp.\  finding the $i$th node in the level-order traversal;
	\item $\TrPrevInt(v)$ and $\TrNextInt(v)$: 
	the non-leaf node closest to $v$ in level-order that comes before resp.\ after $v$.
\end{itemize}
Previously, $\TrRank_{\lvl}$ and $\TrSelect_{\lvl}$ were only supported by the $\LOUDS$ representation of trees~\cite{j1989}, which, however, does not support
rank/select by preorder
(and generally only supports a limited set of operations).
Hence our trees are the only succinct data structures to map between preorder (\ie, depth-first) ranks and level-order (breadth-first) ranks in constant time.
\ifarxiv{%
	\wref{tab:suctree} in \wref{app:table} compares our result to previous work.
}{}

\subparagraph*{Our Results on Interval Graphs.}
Interval graphs are intersection graphs of intervals on the line;
several subclasses are obtained by further restricting how the intervals 
can intersect: no interval is properly contained in another ({\em proper interval graphs}), or every interval is contained by (contains) at most $k$ other intervals (\emph{$k$-proper} resp.\ \emph{$k$-improper interval graphs}).
Circular-arc graphs are intersection graphs of arcs on a circle.
The problem of representing these graphs succinctly has been studied by Acan et al.~\cite{AcanChakrabortyJoRao19}, but without efficient distance queries.
We present succinct representations of interval graphs, proper interval graphs, $k$-proper/$k$-improper graphs, and circular-arc graphs in $n\lg n + (5+\epsilon)n + o(n)$, $2n+o(n)$, $2n\lg k + 8n + o(n\log k)$, and $n \lg n + o(n\lg n)$ bits, respectively, 
where $n$ is the number of vertices and $\epsilon>0$ is an arbitrarily small constant,
such that the following operations are supported (time for interval graphs):
\begin{itemize}
\item $\GDegree(v)$: the degree of $v$, \ie, the number of vertices adjacent to $v$;

\item $\GAdjacent(u, v)$: whether vertices $u$ and $v$ are adjacent;

\item $\GNeighbor(v)$: iterating through the vertices adjacent to $v$;

\item $\GSPath(u, v)$: listing a shortest path from vertex $u$ to $v$;

\item $\GDistance(u, v)$: the length of the shortest path from $u$ to $v$;
\end{itemize}
All query times match those of Acan et al.; \GDistance has the same complexity as \GAdjacent; (see \wref{sec:distance-oracles} for precise statements).
Succinctness of our representations (except $k$-(im)proper interval graphs) is evidenced by information-theoretic lower bounds of $n\lg n -2n\lg\lg n - O(n)$ bits~\cite{GavoillePaul2008,AcanChakrabortyJoRao19} and $2n-O(\log n)$ bits~\cite[Thm.\,12]{Hanlon1982} on representing interval graphs (and circular-arc graphs) and proper interval graphs, respectively. 

The best previous distance oracles for interval graphs, proper interval graphs and
circular-arc graphs all result from corresponding
\emph{distance labelings}, a distributed version of distance oracles, 
due to Gavoille et al.~\cite{GavoillePaul2008}. They require
asymptotically $\sim 5 n \lg n$, $\sim 2 n \lg n$, resp.\ $\sim 10 n\lg n$ bits
to represent the labeled graph. 
We improve all of these results even when adding $n\lg n$ bits to store node labels, 
and our data structures further support operations beyond \GDistance.
Interestingly, our distance oracles also prove \emph{separations} 
between distance labelings and distance oracles: 
Our data structures beat corresponding lower bounds for the lengths of distance labelings~--
$3 \lg n-4 \lg \lg n$ for interval graphs~\cite[Thm.\,2]{GavoillePaul2008} resp.
$2 \lg n-2 \lg \lg n-\Oh(1)$ for proper interval graphs \cite[Thm.\,3]{GavoillePaul2008}~--
showing that these ``centralized'' data structures are strictly more powerful
than distributed ones.

\section{Related Work}
\label{sec:related-work}

\subparagraph*{Succinct Representations of Ordinal Trees.}

The $\LOUDS$ representation, first proposed by Jacobson~\cite{j1989} 
and later studied by Clark and Munro~\cite{cm1996} under the word RAM, 
uses $2n+o(n)$ bits to represent a tree on $n$ nodes, such that, 
given a node, its first child, next sibling and parent can be located in constant time. 
Three other approaches, $\BP$, $\DFUDS$ or $\TC$, 
have since been proposed to support more operations while still using $2n+o(n)$ bits. 
As the oldest tree representation after \LOUDS, \BP-based representations have seen a long history of successive improvements and uses in 
various applications of succinct trees. 
The list of supported operations has grown over a sequence of several works~\cite{MunroRaman2001,mrr2001,cll2005,mr2004,ly2006,s2007,NavarroSadakane2014}
to include all standard operations, bar the level-order ones and $\TrRank_{\DFUDS}$ / $\TrSelect_{\DFUDS}$.
The other representations have a similar history, albeit shorter, and we refer to \cite{bdmrrr2005,JanssonSadakaneSung2012,bhmr2011} for $\DFUDS$ and 
\cite{GearyRamanRaman2006,HeMunroRao2012,frr2009} for $\TC$. 
\ifconf{%
	A full survey is also given in the full version of this paper~\cite{HeMunroNekrichWildWuArXiv},
	including a summary of the operations supported by each of these three approaches.%
}{%
	A full survey is also given in \wref{app:table};
	\wref{tab:suctree} there summarizes the operations supported by each of these three approaches.%
}
\ifarxiv{\par}{}
Most works on succinct data structures for trees have focused on \emph{ordinal} trees,
\ie, trees with unbounded degree where the order of children matters, 
but no distinction is made, \eg, between a left and a right single child.
Some ideas have been translated to \emph{cardinal} trees 
(and binary trees as a special case)~\cite{FarzanMunro2014,DavoodiRamanRao2017}.
Other than supporting more operations, work has been done for alternative goals such as achieving compression~\cite{JanssonSadakaneSung2012,FarzanMunro2014}, reducing redundancy~\cite{NavarroSadakane2014} and supporting updates~\cite{NavarroSadakane2014}.

\subparagraph*{Succinct Representations of Graphs.} 
Several succinct representations of (subclasses of) graphs have been studied, \eg, for general graphs~\cite{fm2008}, $k$-page graphs~\cite{j1989}, certain classes of planar graphs~\cite{cghkl1998,cll2005,cds2008}, separable graphs~\cite{bbk2003}, posets~\cite{mn2016} and distributive lattices~\cite{ms2018}. Recently, Acan et al.~\cite{AcanChakrabortyJoRao19} showed how to represent an \emph{interval graph} on $n$ vertices in $n\lg n + (3+\epsilon) n + o(n)$ bits to support $\GDegree$ and $\GAdjacent$ in $O(1)$ time, $\GNeighbor(v)$ in $O(\GDegree(v))$ time and $\GSPath(u, v)$ in $O(|\GSPath(u, v)|)$ time, where $\epsilon$ is a positive constant that can be arbitrarily small. To show the succinctness of their solution, they proved that $n\lg n -2n\lg\lg n - O(n)$ bits are necessary to represent an interval graph.
They also showed how to represent a \emph{proper interval graph} and a \emph{$k$-proper/$k$-improper interval graph} in $2n + o(n)$ and $2n\lg k + 6n + o(n\log k)$ bits, respectively, supporting the same queries.

\subparagraph*{Distance Oracles.} 
Ravi~et al.~\cite{rmr1992} considered the problem of solving the all-pair shortest path problem over interval graphs in optimal $O(n^2)$ time in 1992. 
Later, Gavoille and Paul in 2008 \cite{GavoillePaul2008} designed a labeling scheme on the vertices using $5\lg n + 3$ bit labels to compute the distance between any two vertices $u$, $v$ of an interval graph in $O(1)$ time. 
Their work implies a $5n\lg n + \Oh(n)$ bit distance oracle by simply concatenating all labels. 
Furthermore, they proved a $3\lg n - o(\lg n)$ bit lower bound for distance labeling. 
On the subject of chordal graphs (which contain interval graphs), Singh et al.~\cite{snr2015} designed a data structure of $O(n)$ words that can \emph{approximate} the distance between two vertices $u$ and $v$ in $O(1)$ time, and the answer is between $|\GDistance(u,v)|$ and $2|\GDistance(u,v)|+8$.
More recently, Munro and Wu~\cite{MunroWu18} designed a succinct representation of chordal graphs using $n^2/4+o(n^2)$ bits, which inspired our new distance oracles. 
They also designed an \emph{approximate} distance oracle of $n\lg n + o(n\log n)$ bits with $O(1)$ query time, where answers are within $1$ of the actual distance.

%
%
%
%
%
%
%
%
%
%
%
%
%
%
%
%
%
%
%
%
%
%
%

%
%
%
%
%
%
%
%
%
%
%
%
%
%
%
%
%
%
%
%
%
%
%
%
%
%
%
%
%
%
%
%
%
%
%
%
%
%
%
%
%
%
%
%
%
%
%
%
%
%
%
%
%
%
%

\section{Notation and Preliminaries}
\label{sec:preliminaries}

We write $[n..m] = \{n,\ldots,m\}$ and $[n] = [1..n]$ for integers $n$, $m$.
We use $\lg$ for $\log_2$ and leave the basis of $\log$ undefined (but constant);
(any occurrence of $\log$ outside an Landau-term should thus be considered a mistake).
As is standard in the field, 
all running times assume the word-RAM model with word size $\Theta(\log n)$.

We use the data structure of 
Pǎtraşcu~\cite{Patrascu2008}
for compressed bitvectors:

\begin{lemma}[Compressed bit vector]
\label{lem:compressed-bit-vectors}
	Let $\mathcal{B}[1..n]$ be a
	bit vector of length $n$, containing $m$ $1$-bits.
	For any constant $c$, there
	is a data structure using
	\(
			\lg \binom{n}{m} \wbin+ O\bigl(\frac{n}{\log^c n}\bigr)
		\wwrel\le 
			m \lg \bigl(\frac nm\bigr) \wbin+ O\bigl(\frac{n}{\log^c n}+m\bigr)
	\)
	bits of space that
	supports the following operations in $O(1)$ time
	(for $i \in [1,n]$):
	\begin{itemize}
		\item $\accessop(\mathcal{B}, i)$: return $\mathcal B[i]$, the bit at index $i$ in $\mathcal{B}$.
		\item $\rankop_\alpha(\mathcal{B}, i)$: return the number of bits with
		value $\alpha \in \{0,1\}$ in $\mathcal{B}[1..i]$.
		\item $\selop_\alpha(\mathcal{B}, i)$: return the index of the $i$-th
		bit with value $\alpha \in \{0,1\}$.
	\end{itemize}
\end{lemma}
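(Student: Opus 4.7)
The plan is to follow Pătraşcu's \emph{Succincter} construction, which generalises the two-level rank/select scheme of Jacobson and Clark--Munro by combining the RRR-style enumerative encoding of blocks with an \emph{aggregated, tabulated} representation of the block-class sequence. I would partition $\mathcal{B}$ into blocks of $b=\Theta(\log n)$ bits and group $s=\Theta(\log^{c+2} n)$ consecutive blocks into a superblock of $B=bs$ bits.

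For the space bound, each block $\mathcal{B}_i$ is replaced by the pair $(m_i,\mathit{off}_i)$, where $m_i\in[0..b]$ is its popcount and $\mathit{off}_i$ is the index of $\mathcal{B}_i$ in the lexicographic listing of $b$-bit strings of popcount $m_i$, occupying $\lceil\lg\binom{b}{m_i}\rceil$ bits. The offsets are concatenated into a single bit stream; by log-concavity of the binomial coefficients and because $\sum_i b=n$ and $\sum_i m_i=m$, the total offset length is at most $\lg\binom{n}{m}+O(n/b)$, which matches the stated main term. Storing the classes naively in $\lceil\lg(b+1)\rceil$ bits each would add the classical RRR redundancy of $\Theta(n\log\log n/\log n)$; to eliminate this, the class sequence of each superblock is packed into a single $O(\log n)$-bit \emph{signature}, and cumulative counts within a superblock are read from a universal lookup table of $n^{o(1)}$ bits.

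For $\rankop_1(\mathcal{B},i)$, I would precompute an absolute prefix-sum of popcounts at superblock boundaries, costing $O((n/B)\log n)=O(n/\log^{c+1}n)$ bits overall; given $i$, locate the enclosing superblock and block, add the intra-superblock popcount obtained by looking up its signature in the universal table, then finish with a standard 4-Russian table indexed by $(m_i,\mathit{off}_i,\text{local position})$ to count $1$-bits inside the target block. Access is a special case of this lookup. $\selop_\alpha$ mirrors the construction: store the superblock index of every $B$-th $\alpha$-bit at a total cost of $O(n/\log^{c+1}n)$ bits, use the signature table (keyed by signature and a residual rank) to identify the block containing the target bit, and finish inside the block by one further table lookup. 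Bitvector offsets into the variable-length offset stream are recovered analogously from the signature.

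The main obstacle will be calibrating the three universal tables~-- signature-to-cumulative-popcount, signature-to-offset-stream-position, and intra-block rank/select~-- so that each has $n^{o(1)}$ entries while each signature still fits in $O(1)$ machine words and the per-superblock pointer arrays stay within $O(n/\log^c n)$. This is exactly the delicate balance achieved in Pătraşcu's framework, and it is what separates the $O(n/\log^c n)$ redundancy from the weaker $O(n\log\log n/\log n)$ bound of plain RRR. Once $b$, $s$, and the signature encoding are chosen consistently, all query procedures reduce to $O(1)$ table lookups and word-level arithmetic on the word-RAM, and the redundancy summation is routine.
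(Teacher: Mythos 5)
The paper does not prove this lemma; it invokes it by citation to P\u{a}tra\c{s}cu's \emph{Succincter} paper, so your task is really to reconstruct that proof. Your reconstruction correctly sets up the RRR baseline (per-block class/offset pairs, total offset length $\lg\binom{n}{m}+O(n/b)$, explicit counters at superblock boundaries), but the step you rely on to beat the classical $\Theta(n\log\log n/\log n)$ redundancy --- ``the class sequence of each superblock is packed into a single $O(\log n)$-bit signature'' read via a universal table --- cannot work for information-theoretic reasons. A superblock of $s=\Theta(\log^{c+2}n)$ blocks has $(b+1)^{s}=2^{\Theta(s\log\log n)}$ possible class sequences, i.e.\ the class sequence itself carries up to $\Theta(\log^{c+2}n\cdot\log\log n)$ bits of entropy; no $O(\log n)$-bit signature can determine it, and no $n^{o(1)}$-entry table keyed by such a signature can return exact cumulative popcounts or offset-stream positions. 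Shrinking the superblock until its class sequence \emph{does} fit in $O(\log n)$ bits forces superblocks of $O(\log^{2}n/\log\log n)$ bits, and the explicit per-superblock counters then cost $\Theta(n\log\log n/\log n)$ bits --- exactly the RRR redundancy you set out to remove. This trade-off is the known barrier, and no calibration of $b$, $s$, and the signature encoding escapes it.

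The missing idea is P\u{a}tra\c{s}cu's \emph{spill-over representation} combined with recursive aggregation over an augmented $B$-tree of constant depth. There, a variable is stored as $M$ whole memory bits plus a ``spill'' ranging over a small universe $[K']$, so that the quantization loss is $O(\log K'/K')$ bits rather than up to one bit; an internal node recombines its children's spills and memories, re-encodes, and passes a new small spill upward, losing only $O(1/r)$ bits of entropy per node. Applying this within chunks of $\log^{\Theta(c)}n$ bits, with branching factor $B=\log^{\epsilon}n$ and hence depth $O(c/\epsilon)=O(1)$, yields $O(1)$ bits of redundancy per chunk and lookup tables of size $n^{o(1)}$ (because a query inspects only $O(B)$ augmented values per level, each from a universe of size $\log^{O(c)}n$); the top level then stores explicit ranks at chunk boundaries for $O(n/\log^{c}n)$ bits. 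Your select sketch has the same secondary issue as classical constructions (consecutive sampled $\alpha$-bits may span many superblocks, requiring the Clark--Munro sparse/dense case split), but the architectural gap is the signature mechanism: without the spill-over technique your construction degrades to plain RRR.
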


\section{Tree Slabbing}
\label{sec:slabbing}

In this section, we describe the new tree-covering method used in our data structure.
Throughout this paper, let $T$ be an ordinal tree over $n$ nodes.
We will identify nodes with their ranks $1,\ldots,n$ (order of appearance)
in a preorder traversal.
Tree covering (\TC) relies on a two-tier decomposition: 
the tree consists of mini trees, each of which consists of micro trees.
The former will be denoted by $\mu^i$, the latter by $\mu^i_j$.

\subsection{The Farzan-Munro Algorithm}

We will build upon previously used tree covering schemes.
A greedy bottom-up approach suffices to break a tree of $n$ nodes into $O(n/B)$
subtrees of $O(B)$ nodes each~\cite{GearyRamanRaman2006}. 
However, more carefully designed procedures yield 
restrictions on the touching points of subtrees:
\begin{lemma}[{Tree Covering, \cite[Thm.\,1]{FarzanMunro2014}}]
\label{lem:tree-decomposition}
	For any parameter $B\ge 3$, an ordinal tree with $n$ nodes can be decomposed,
	in linear time, into connected subtrees with the following properties.
	\begin{thmenumerate}{thm:tree-decomposition}
		\item Subtrees are pairwise disjoint except for (potentially) 
			sharing a common subtree root.
		\item Each subtree contain at most $2B$ nodes.
		\item The overall number of subtrees is $\Theta(n/B)$.
		\item Apart from edges leaving the subtree root, at most one other 
			edge leads to a node outside of this subtree.
			This edge is called the ``external edge'' of the subtree.
	\end{thmenumerate}
\end{lemma}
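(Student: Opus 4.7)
The plan is a recursive greedy decomposition. At each step, I would identify an ``inflection'' node $v$ whose subtree $T_v$ is the smallest satisfying $|T_v| > B$; by this choice, every child of $v$ roots a subtree of size at most $B$. I would then scan $v$'s children left-to-right and pack consecutive child-subtrees into bundles of total size in $[B, 2B-1]$, emitting each bundle together with $v$ as a closed subtree. If a final bundle at $v$ has size below $B$, I leave it attached to $v$ and carry it upward so that it can be absorbed into a larger subtree at the next recursion level, with $v$ serving as the shared root. After emitting all closed bundles at $v$, I delete their interior nodes and recurse on the reduced tree; the recursion terminates once the remaining tree has $\le 2B$ nodes, at which point it is emitted as a single subtree.

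Properties (a)--(c) then fall out of the construction. For (a), the only node that can appear in two subtrees is the inflection $v$ at each step, which is shared precisely with the enclosing subtree that absorbs its carried-up residual. For (b), bundle sizes are controlled explicitly to lie in $[B+1, 2B]$, with the global root-residual also bounded by $2B$. For (c), every closed bundle has size at least $B$, and at most one incomplete bundle is carried per recursion step; a direct charging argument then yields $\Theta(n/B)$ subtrees in total. Linearity of the algorithm follows from each node being touched $O(1)$ times (once when its subtree size is first computed, once when it is packed into a bundle or shared as an inflection).

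The main obstacle is property (d), the ``at most one external edge'' condition. When a closed bundle is formed at $v$, every edge that leaves it is incident to $v$ itself and is therefore exempt as a root edge, so these bundles trivially satisfy (d). The delicate case is a subtree at a higher recursion level that absorbs an incomplete residual carried up from an earlier inflection node $v$: such a subtree has boundary edges descending into the interiors of previously-packed bundles below $v$. I would argue inductively that all such descending boundary edges attach at $v$ itself, which is the shared root of the enclosing subtree and thus again exempt, leaving at most one truly external edge~--- the edge cut at the current recursion level, connecting the newly-formed subtree upward to the part of $T$ still awaiting decomposition. Making this inductive bookkeeping precise, and in particular ensuring that the residual-carrying mechanism never introduces a second non-root boundary edge, is the part that requires the most care.
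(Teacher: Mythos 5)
The paper does not prove this lemma at all~--- it is quoted verbatim from Farzan and Munro (Theorem~1 of their paper), so your proposal has to be judged as a from-scratch proof, and there it has a genuine gap exactly at the place you flag. Your construction is essentially the classic greedy bottom-up covering (the one the paper attributes to Geary et al.\ and explicitly says does \emph{not} give the restriction on touching points), and the residual-carrying step breaks property (d). Concretely: let $v$ have $3B$ leaf children and sit below a long path. When $v$ is the inflection node you emit two or three closed bundles, each consisting of $v$ plus about $B$ of its children, and carry a small residual upward. Later some ancestor $u$ becomes the inflection node, and the bundle formed at $u$ contains $v$ as an \emph{interior} node together with its residual. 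Every edge from $v$ to a child packed into an earlier closed bundle now leaves this new subtree from a non-root node~--- there are $\Theta(B)$ of them. Your proposed fix, that these edges ``attach at $v$ itself, which is the shared root of the enclosing subtree and thus exempt,'' is false: $v$ is the root of the \emph{earlier} bundles, not of the enclosing subtree, and the exemption in (d) applies only to the root of the subtree under consideration (the paper's later use of the lemma, e.g.\ the level-order-changer count where a fat subtree gets one gap from the split root children and one from ``the single external edge,'' depends on this strict reading). The same example also strains (a), since $v$ is shared between subtrees of which it is not always the root. Nor is the problem patched by simply closing each residual as its own component rooted at $v$: then the component containing $v$'s parent can acquire one external edge per such closed-off child (think of a complete binary tree), so (d) still fails.

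What Farzan and Munro actually do is structurally different and is precisely designed to avoid this. They classify children as heavy (subtree size exceeding the threshold) or light, group only \emph{light} children of a node into consecutive blocks forming components rooted at that node, never merge a heavy child's pending piece sideways into a sibling block, and close all pending pieces at any node with two or more heavy children (there are only $O(n/B)$ such nodes, so the count in (c) survives). The effect is the invariant your scheme lacks: a node that roots components is never later swallowed as an interior node of another component, and the only non-root boundary edge a component can have is the single edge toward a heavy child~--- the ``external edge.'' This heavy-child bookkeeping is also what produces the refinement the paper states right after the lemma (a shared root's children are split into consecutive blocks, at most two blocks per subtree, arising exactly when the root has one heavy child), for which your construction has no analogue. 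So the overall greedy skeleton and the size/count accounting in your (b)--(c) are fine, but the argument for (d) is not repairable within your residual-carrying mechanism; it needs the heavy/light case analysis (or an equivalent device).
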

Inspecting  the proof, we can say a bit more:
If $v$ is a node in the (entire) tree and is also the root of several subtrees 
(in the decomposition), then the way that $v$'s children (in the entire tree) are
divided among the subtrees is into \emph{consecutive} blocks.
Each subtree contains at most two of these blocks. 
(This case arises when the subtree root has 
exactly one heavy child: a node whose subtree size is greater than $B$, in the decomposition algorithm.)

\subparagraph*{Why is level-order rank/select hard?}
\label{sec:why-level-order-is-hard}

Suppose we try to compute the level-order rank of a node $v$, and we try to reduce the
global query (on the entire tree $T$) to a local query that is constrained to a mini tree $\mu^i$.
This task is easy if we can afford to store the level-order ranks of the leftmost node in $\mu^i$ 
\emph{for each level} of $\mu^i$: then the level-order rank of $v$ is simply 
the global level-order rank of $w$, where $w$ is the leftmost node in $\mu^i$ on $v$'s level
($v$'s depth), 
plus the local level-order rank of $v$,
minus the local level-order rank of $w$ minus one 
(since we double counted the nodes in $\mu^i$ on the levels above $w$).

However, for general trees, we cannot afford to store the level-order rank of all leftmost nodes.
This would require $\TrHeight(\mu^i)\cdot \lg n$ bits for $\TrHeight(\mu^i)$ the height of $\mu^i$; towards a sublinear overhead in total,
we would need a $o(1)$ overhead per node, 
which would (on average) require $\mu^i$ to have $|\mu^i| = \omega(\TrHeight(\mu^i) \log n)$ nodes
or height $\TrHeight(\mu^i) = o(|\mu^i| / \log n)$.
Since the tree $T$ to be stored can be one long path 
(or a collection of few paths with small off-path subtrees etc.), 
any approach based on decomposing $T$ into induced subtrees is bound to fail the 
above requirement.

The solution to this dilemma is the observation that the above ``bad trees''
have another feature that we can exploit: The total number of nodes 
on a certain interval of levels is small.
If we keep such an entire horizontal slab of $T$ together,
translating global level-order rank queries into local ones does not need
the ranks of all leftmost nodes: everything in these levels is entirely contained
in $\mu^i$ now, and it suffices to add the level-order rank of the (leftmost) root in $\mu^i$.

Our scheme is based on decomposing the tree into parts that are 
one of these two extreme cases~-- ``skinny slabs'' or ``fat subtrees''~--
and counting them separately to amortize the cost for storing 
level-order information.

\subsection{Covering by Slabs}

We fix two parameters: $H \in\N$, the \underline height of slabs,
and $B > H$, the target \underline block size.
We start by cutting $T$ horizontally into slabs of thickness/height exactly $H$,
but we allow ourselves to start cutting at an offset $o\in[H]$.
We choose $o$ so as to minimize the total number of nodes on levels 
at which we make the horizontal cuts.
We call these nodes \emph{s-nodes} (``\underline slabbed nodes''), 
and their parent edges \emph{slabbed edges}.
A simple counting argument shows that the number of $s$-nodes (and slabbed edges)
is at most $n / H$.

We will identify induced subgraphs with the set of nodes that they are induced by.
So 
$S_i = \bigl\{v : \TrDepth(v) \in [{(i-1)H +o}\mathrel{\;..\;}{iH +o}]\bigr\}$,
the set of nodes making up the $i$th slab,
also denotes the $i$th slab itself, $i=0,\ldots,h$.
Obviously, the number of slabs is $h+1 \le n / H + 2$.
We note that the $s$-nodes are contained in \emph{two} slabs. 
For any given slab, we will refer to the first $s$-level included as 
(original) $s$-nodes and the second as \emph{promoted} $s$-nodes. 
Note that the first slab does not contain any $s$-nodes and the last slab 
does not contain promoted $s$-nodes.

Since $S_i$ is (in general) a \emph{set} of subtrees, 
ordered by the left-to-right order of their roots,
we will add a \emph{dummy root} to turn it into a single tree.
We note that the $s$-nodes are the first (after the dummy root) 
and the last levels of any slab.

If $|S_i| \le B$, $S_i$ is a \emph{skinny} subtree (after adding the dummy root)
and will not be further subdivided.
If $|S_i| > B$, we apply the Farzan-Munro tree-covering scheme 
(\wref{lem:tree-decomposition}) with parameter $B$ 
to the slab (with the dummy root added) to obtain \emph{fat subtrees}.
This directly yields the following result; an example is shown in \wref{fig:tree-slabbing-example}.

\begin{figure}[thb]
	\resizebox{\linewidth}!{\input{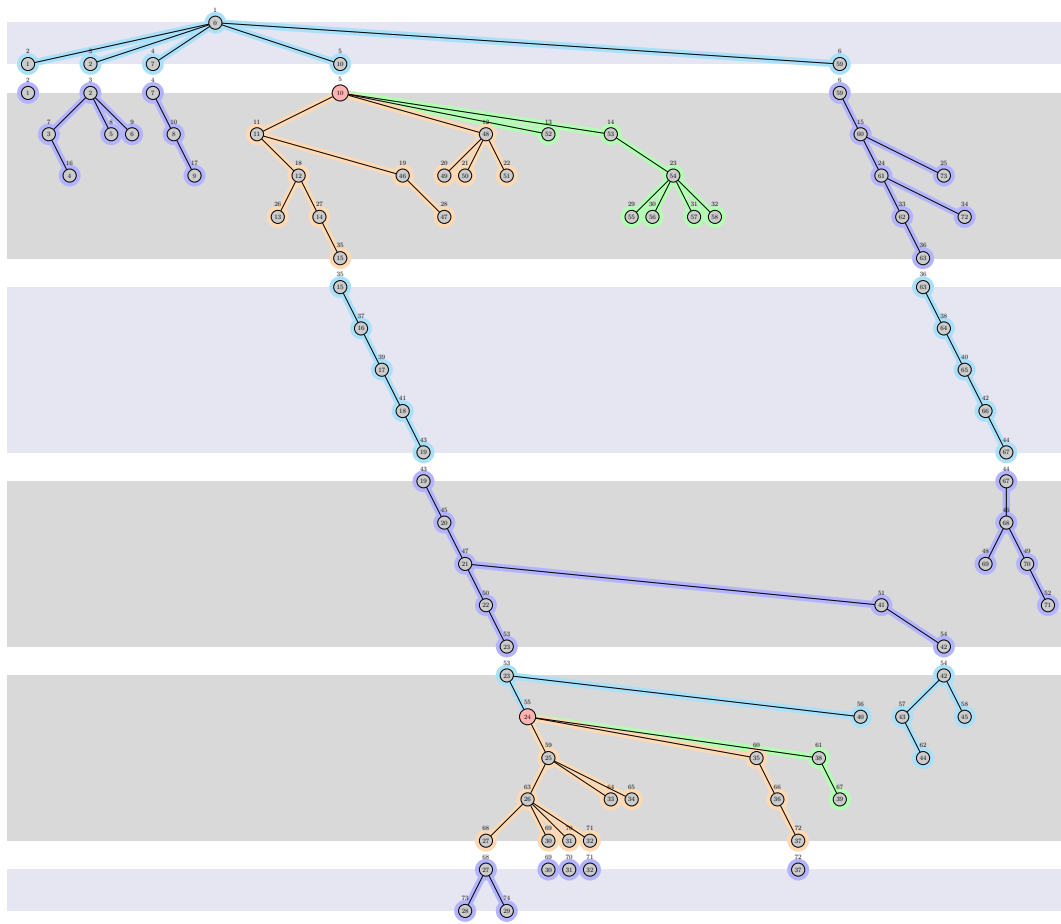}}
	\caption{%
		An example of the tree-slabbing decomposition from \wref{thm:tree-slabbing} 
		with $B=11$ and $H=4$.
		Slabs are shown as shaded areas (light blue for skinny slabs, light gray for fat slabs).
		All s-nodes are depicted twice, one in each slab they belong to.
		The trees within a slab are connected by a dummy root (not depicted) and further 
		decomposed as in \wref{lem:tree-decomposition};
		the resulting subtrees are shown by the edge colors.%
	}
	\label{fig:tree-slabbing-example}
\end{figure}

\begin{theorem}[Tree Slabbing]
\label{thm:tree-slabbing}
	For any parameters $B > H \ge 3$, an ordinal tree $T$ with $n$ nodes 
	can be decomposed, in linear time, 
	into connected subtrees with the following properties.
	\begin{thmenumerate}{thm:tree-slabbing}
		\item Subtrees are pairwise disjoint except for (potentially) 
			sharing a common subtree root.
		\item Subtrees have size $\le M = 2B$ and height $\le H$.
		\item Every subtree is either \emph{pure} 
			(a connected induced subgraph of $T$),
			or \emph{glued} 
			(a dummy root, whose children are connected induced subgraphs of $T$).
		\item Every subtree is either a \emph{skinny (slab)} subtree
			(an entire slab) or \emph{fat}.
		\item The overall number of subtrees is $\Oh(n/H)$,
			among which $\Oh(n/B)$ are fat.
		\item Connections between subtrees $\mu$ and $\mu'$ 
			are of the following types:
		\begin{enumerate}[
			label={\textsf{\textbf{\color{darkgray}{\makebox[\widthof{(ii)}][c]{%
			\arabic*.}}}}},
			leftmargin=1.5em,
			topsep=0.5ex,
		]
			\item $\mu$ and $\mu'$ share a common root. Each subtree contains at most
				two blocks of consecutive children of a shared root.
			\item The root of $\mu'$ is a child of the root of $\mu$.
			\item The root of $\mu'$ is a child of another node in $\mu$.
				This happens at most once in $\mu$.
			\item $\mu'$ contains the original copy of a promoted s-node in $\mu$.
				The total number of these connections is $\Oh(n/H)$.
		\end{enumerate}
	\end{thmenumerate}
\end{theorem}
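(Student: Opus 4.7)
The construction has been sketched in the preceding text; the proof amounts to verifying the six listed properties against it. I would organize the argument in two phases: first analyzing the slab partition on its own, then combining it with the Farzan--Munro decomposition applied to the fat slabs.

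The first key step is the s-node bound via pigeonhole on the offset. Viewing $o \in [H]$ as a parameter, each depth level of $T$ is a cut level for exactly one value of $o$, so summing the number of s-nodes over the $H$ choices of $o$ equals $n$; hence the minimum is at most $n/H$. With this optimal offset fixed, each slab $S_i$ is a forest whose roots are the top s-nodes, and prepending a dummy root makes it a single tree of height at most $H$. The total number of slabs is $\Oh(n/H)$, which already handles the skinny contribution to~(e) and the height bound in~(b).

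For fat slabs ($|S_i|>B$), I apply \wref{lem:tree-decomposition} with parameter $B$ to the slab together with its dummy root. The lemma directly gives subtrees of size $\le 2B$; they inherit the height bound $\le H$ from the slab, and each is either pure or glued depending on whether it contains the dummy root, yielding (b), (c), and~(d). Disjointness in~(a) follows because within a single slab \wref{lem:tree-decomposition}(a) applies, and across slabs the only point of contact is at s-nodes, which by construction appear as roots of subtrees in the lower slab and as the deepest nodes (promoted copies) in the upper one. The fat-subtree count is $\sum_i \Theta(|S_i|/B) = \Oh(n/B)$, where the s-node duplication inflates the underlying node total by only $\Oh(n/H)$; combined with the $\Oh(n/H)$ skinny subtrees this proves~(e). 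Linear-time construction is immediate from a single top-down pass to choose $o$ and form the slabs, plus one linear-time call to the Farzan--Munro algorithm per fat slab.

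The main obstacle, and the part I would treat most carefully, is the connection taxonomy~(f). For each touching pair $\mu, \mu'$ I would case-split on whether they belong to the same slab. Same-slab incidences are inherited from \wref{lem:tree-decomposition}: they fall into type~1 (shared subtree root, where the ``two consecutive blocks of children'' clause is exactly the refinement of the Farzan--Munro construction noted immediately after that lemma), type~2 (root of $\mu'$ is a direct child of the root of $\mu$), and type~3 (the unique external edge of $\mu$, which the lemma guarantees occurs at most once per subtree). Cross-slab pairs can only meet at an s-node that appears as a deepest-level node (promoted copy) in the upper slab and as the root of some subtree (original copy) in the lower slab; this matches type~4 exactly, and the number of such incidences is at most twice the s-node count, i.e., $\Oh(n/H)$. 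A minor subtlety to check is that a dummy root never itself creates an inter-slab connection, since its only out-of-slab neighbors would be the top s-nodes of the slab, which are already absorbed as its own children; thus every cross-slab connection is correctly charged to~(f.4), completing the classification.
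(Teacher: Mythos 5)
Your proof is correct and follows exactly the route the paper intends: the paper gives no separate proof, stating only that the slab construction plus the Farzan--Munro decomposition of fat slabs ``directly yields'' the theorem, and your verification (pigeonhole over offsets for the $n/H$ s-node bound, inheritance of size/height/purity from \wref{lem:tree-decomposition}, the $\sum_i \Theta(|S_i|/B)$ count for fat subtrees, and the same-slab vs.\ cross-slab case split for the connection taxonomy) is precisely the intended filling-in of those details.
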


\subparagraph*{,,Oans, zwoa, G'suffa.``}

The above tree-slabbing scheme has two parameters, $H$ and $B$.
We will invoke it \emph{twice}, 
first using $H  = \lceil\lg^3 n \rceil$ and $B = \lceil \lg^5 n\rceil$ to form
$m$ mini trees $\mu^1,\ldots,\mu^m$ of at most $M = 2B$ nodes each.
While in general we only know $m = \Oh(n/H) = \Oh(n/\log^3 n)$,
only $\Oh(n/M) = \Oh(n/\log^5 n)$ of these mini trees are 
\emph{fat} subtrees (subtrees of a fat slab),
the others being skinny.
Mini trees $\mu^i$ are recursively decomposed by
tree slabbing with height $H'  = \lceil \frac{\lg n}{(\lg\lg n)^2} \rceil$
and block size $B' = \lceil \frac18 \lg n\rceil$ 
into micro trees $\mu^i_1,\ldots,\mu^i_{m'_i}$ of size at most
$M' = 2b = \frac14 \lg n$.
The total number of micro trees is $m' = m'_1+\cdots+m'_{m} = \Oh(n/H')$,
but at most $\Oh(n/B')$ are fat micro trees.
We refer to the $s$-nodes created at mini resp.\ micro tree level
as \emph{tier-1} resp.\ \emph{tier-2} $s$-nodes. 
After these two levels of recursion we have reached a size for 
micro trees small enough to use a ``Four-Russian''
lookup table (including support for various micro-tree-local operations)
that takes sublinear space.

\subparagraph*{Internal node ids.}

Internally to our data structure, we will identify a node $v$ by 
its ``$\tau$-name'',
a triple specifying the mini tree, the micro tree within the mini tree, 
and the node within the micro tree.
More specifically,
$\tau(v) = \langle \tau_1,\tau_2,\tau_3\rangle$ means that $v$
is the $\tau_3$th node in the micro-tree-local preorder (DFS order) traversal of $\mu^{\tau_1}_{\tau_2}$;
mini trees are ordered by when their first node
appears in a preorder traversal of $T$, ties (among subtrees sharing roots) broken
by the second node, and similarly for micro trees inside one mini tree.

Since there are $O(n/H)$ mini trees, $O(B/H')$ micro trees 
inside one mini tree, and $O(B')$ nodes in one micro tree,
we can encode any $\tau$-name with 
$\sim \lg n + 2 \lg \lg n  +2\lg\lg\lg n$ bits.
The concatenation $\tau_1(v)\tau_2(v)\tau_3(v)$ can be seen as a binary number;
listing nodes in increasing order of that number gives the \emph{$\tau$-order}
of nodes.

\subparagraph*{Who gets promotion?}

A challenge in tree covering is to handle operations like \TrChild 
when they cross subtree boundaries.
The solution is to add the endpoint of a crossing edge also
to the parent mini/micro tree; 
these copies of nodes are called \emph{(tier-1/tier-2) promoted nodes}.
They have their own $\tau$-name, but actually refer to the same original node;
we call the $\tau$-name of the original node the \emph{canonical $\tau$-name}.

For tree slabbing, we additionally have slabbed edges to handle.
As mentioned earlier,
we promote \emph{all} endpoints of slabbed edges into the parent slab
\emph{before we further decompose a slab}.
That way, the size bounds for subtrees already include any promoted copies,
but we blow up the number of subtrees 
by an~-- asymptotically negligible~-- factor of $1+1/H\sim 1$.
Promoted s-nodes again have both canonical and secondary $\tau$-names.

\section{Operations on Slabbed Trees}
\label{sec:operations}

We now describe how to support operations efficiently in our data structure.
\ifconf{%
	Due to space constraints, we describe some exemplary ones here
	and defer the others to the full version of this paper~\cite{HeMunroNekrichWildWuArXiv}.%
}{%
	We describe some exemplary ones here
	and defer the others to \wref{app:more-operations}.%
}

We start by describing some common concepts.
The \emph{type} of a micro tree is the concatenation of its size (in Elias code), 
the \BP of its local shape, 
and the preorder rank of the promoted dummy node (0 if there is none),
and several bits indicating whether the lowest level are promoted $s$-nodes, 
and whether the root is a dummy root.
We store a variable-cell array of the \emph{types} of all micro trees
in $\tau$-order.
The \BP of all micro trees will sum to $2n+\Oh(n/H') = 2n + o(n)$ bits of space;
the other components of the type are asymptotically negligible.
A type consists of at most $\sim \frac12 \lg n$ bits, 
so we can store a table of all possible
types with various additional precomputed local operations in 
$\Oh(\sqrt n \operatorname{polylog}(n))$~bits.

\subsection{Preorder rank/select}

We first consider how to convert between global preorder ranks and $\tau$-names.
Let us fix one level of subtrees, say mini trees. 
Consider the sequence $\tau_1(v)$ for all the nodes $v$ in a preorder traversal.
A node $v$ so that $\tau_1(v) \ne \tau_1(v-1)$
is called a \emph{(tier-1) preorder changer} \cite[Def.\,4.1]{HeMunroRao2012}.
Similarly, nodes $v$ with $\tau_2(v) \ne \tau_2(v-1)$ are called
\emph{(tier-2) preorder changers}.
We will associate with each node $v$ ``its'' tier-1 (tier-2) preorder changer $u$,
which is the last preorder changer preceding $v$ in preorder, \ie, 
$\max \{u \in [1..v] : \tau_1(u) \ne \tau_1(u-1) \}$;
(Recall that we identify nodes with their preorder rank.)

By Theorem~\ref{thm:tree-slabbing}, the number of tier-1 preorder changers is
$O(n/H)$, since the only times a mini-tree can be broken up is through the external edge 
(once per tree), the two different blocks of children of the root, or at slabbed edges.
Similarly, we have $O(n/H')$ tier-2 preorder changers.
We can thus store a compressed bitvector (\wref{lem:compressed-bit-vectors})
to indicate which nodes in a preorder traversal are (tier-1/tier-2) preorder changers.
The space for that is 
$O(\frac nH \log (H) + n \frac{\log \log n}{\log n}) = o(n)$ for tier 1
and  
$
		O(\frac n{H'}\log H' + n \frac{\log \log n}{\log n}) 
	=	O(n \frac{(\log \log n)^3}{\log n}) = o(n)
$ for tier 2.

We will additionally store a compressed bitvector indicating preorder changers
by $\tau$-name, \ie, we traverse all nodes in $\tau$-order and add a $1$ if the current
node is a preorder changer, and a $0$ if not.
We can afford to do this using \wref{lem:compressed-bit-vectors} for 
tier-1 and tier-2 in $o(n)$ bits.
(The universe grows to $n \operatorname{polylog}(n)$, 
but with sufficiently large~$c$ that does not affect 
the space by more than a constant factor).
We can store $O(\log n)$ bits for each tier-1 changer
and $O(\log\log n)$ bits for each tier-2 changer in an array, 
and using rank on the above bitvectors,
we can access that information given the node's global preorder or $\tau$-names.

\subparagraph{Select} 

Given the preorder number of a node $v$, we want to find $\tau(v)$.
Let $u$ and $u'$ be the tier-1 resp.\ tier-2 preorder changers associated
with $v$.
The core observation is that $\tau_1(u) = \tau_1(v)$ and $\tau_2(u') = \tau_2(v)$,
since a node's tier-1 (tier-2) preorder changer by definition lies in the same
mini- (micro-) tree as $v$.
We thus store the array of $\tau_1$-numbers of all tier-1 preorder changers
as they are visited by a preorder traversal of $T$.
Using rank and select on the bitvectors from above, 
we find $u$, for which we look up $\tau_1$.
The procedure applies, \textit{mutatis mutandis}, to $\tau_2$ using the 
tier-2 preorder changer $u'$.
Since $\tau_2$ is local to a mini tree, $\lg M = O(\log \log n)$ bits suffice,
so we can afford to store $\tau_2$ for every tier-2 changer.
We also store the $\tau_3$-number for each tier-2 changer in the same space. 
We can then obtain $\tau_3(v)$ as the sum of $\tau_3(u')$
and the distance from the last 1 in the bit vector indicating tier-2 changers.

\subparagraph{Rank} 
Given $\tau(v)=\langle\tau_1,\tau_2,\tau_3\rangle$, find the global preorder rank.
Let again $u$ and $u'$ be the tier-1 resp.\ tier-2 preorder changers associated
with $v$. The idea is to compute the preorder rank as $u + (u'-u) + (v-u')$,
\ie, the global preorder of $u$ and the distances between $u$ and $u'$ resp.\ 
$u'$ and $v$.
Of course, we do not know $u$ and $u'$ or their distances directly, 
but we can store them as follows.
We use the $\tau$-order of nodes
to store the mapping from $\tau$-name of tier-1 preorder changers to their
global preorder ranks.
For each tier-2 changer, we store the mapping of $\tau$-names to distances 
to associated tier-1 changers ($O(\log\log n)$ bits each).

It remains to compute $\tau(u)$ and $\tau(u')$ from $\tau(v)$.
$v$ and $u'$ only differ in $\tau_3$ and we use the micro-tree lookup table
to store $\tau_3$ of each node's tier-2 changer.
Then, we store for each tier-2 changer $u'$ the pair $\langle\tau_2,\tau_3\rangle$
of its tier-1 changer  (another $O(\log\log n)$ bits each).
Using the $\tau$-names of $u$ and $u'$, we obtain the preorder rank of $v$.

\subsection{Level-order rank/select}

Let $w_1,\ldots,w_n$ be the nodes of $T$ in level order,
\ie, $w_i$ is the $i$th node visited in the left-to-right breadth-first traversal
of $T$.
Similar to the preorder, we call a node $w_i$ a \emph{tier-1 (tier-2) level-order changer}
if $w_{i-1}$ and $w_i$ are in different mini- (micro-) trees.
The following lemma bounds the number of tier-1 (tier-2) level-order changers.

\begin{lemma}
	\label{lem:levelorder-changer}
The number of tier-1 (tier-2) level-order changers is 
$\Oh(n/H + n H/B) = \Oh(n/\log^2 n)$ ($\Oh(n/H'+nH'/B') = \Oh(n/(\log\log n)^2)$).
\end{lemma}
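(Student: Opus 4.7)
The plan is to prove both bounds by counting maximal runs (``blocks'') of same-(mini/micro-)tree nodes in the global level-order, since the number of tier-$k$ level-order changers is exactly one less than the total number of tier-$k$ blocks. I would bound the block count by summing over all mini-/micro-trees, handling skinny and fat pieces separately, and then reduce tier~2 to a recursive application of the tier~1 argument within each mini-tree.

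For tier~1, the key structural claim is that each fat mini-tree $\mu^i$ contributes $O(1)$ blocks at each single level of $T$. This follows from \wref{thm:tree-slabbing}: the children of the root of $\mu^i$ that lie in $\mu^i$ form at most two consecutive blocks (type~1 connection), so descendants of $\mu^i$ at any deeper level also form at most two contiguous ranges within the global left-to-right order at that level; a single external edge (type~3) can further puncture one of these ranges, adding at most one more block. Since $\mu^i$ has height $\le H$, it contributes $O(H)$ blocks in total over all levels. A skinny slab mini-tree, on the other hand, contains all canonical nodes at a contiguous range of depths of $T$ and therefore occupies entire levels of $T$ consecutively in global level-order, giving $O(1)$ blocks. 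Summing over the $O(n/B)$ fat and $O(n/H)$ skinny mini-trees yields $O(nH/B + n/H)$, which with $H = \lceil\lg^3 n\rceil$ and $B = \lceil\lg^5 n\rceil$ becomes $O(n/\lg^2 n)$, matching the lemma.

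For tier~2, I would first observe that restricting $T$'s global level-order to the canonical nodes of any single mini-tree $\mu^j$ yields exactly $\mu^j$'s own level-order (the depth ordering is preserved because depths in $T$ and in $\mu^j$ differ by a constant offset, and the left-to-right order of $\mu^j$'s canonical nodes at each depth is inherited from $T$). Hence the micro-tree blocks inside $\mu^j$'s canonical nodes, viewed within $T$'s global level-order, correspond to the micro-tree blocks in $\mu^j$'s own level-order, except that each of the $b_j$ global mini-tree blocks of $\mu^j$ can split at most one micro-tree block in two, contributing at most $b_j - 1$ extra blocks. Applying the tier~1 argument recursively to $\mu^j$ (an ordinal tree with its tree-slabbing decomposition of parameters $H'$, $B'$) bounds the number of micro-tree blocks in $\mu^j$'s own level-order by $O(|\mu^j|/H' + |\mu^j|H'/B')$. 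Summing over $j$, using $\sum_j |\mu^j| = O(n)$ (promotion only inflates the total by a factor $1 + O(1/H)$) and $\sum_j b_j = O(n/H + nH/B)$ from tier~1, gives $O(n/H' + nH'/B') + O(n/H + nH/B) = O(n/H' + nH'/B')$, since the tier~1 bound is dominated for our parameters. With $H' = \lceil\lg n/(\lg\lg n)^2\rceil$ and $B' = \lceil\tfrac{1}{8}\lg n\rceil$, the dominant term is $O(nH'/B') = O(n/(\lg\lg n)^2)$.

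The main obstacle will be making the ``$O(1)$ blocks per level'' structural claim precise in the presence of dummy roots, promoted s-nodes, and shared mini-tree roots; each of these introduces at most one additional contiguous range at any level and so is absorbed into the $O(1)$ constant, but spelling this out case-by-case (using clauses~1--4 of the connection types in \wref{thm:tree-slabbing}) requires some care.
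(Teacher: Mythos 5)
Your proposal is correct and follows essentially the same route as the paper's proof: skinny slabs are contiguous in level order and contribute $\Oh(1)$ changers each, while each fat subtree contributes $\Oh(1)$ intervals per level (two from the split blocks of root children, one more from the single external edge, with s-node connections confined to a single level) over its $\le H$ levels, giving $\Oh(n/H + nH/B)$. Your tier-2 argument is a careful spelling-out of the paper's one-line ``tier 2 is similar,'' correctly reducing to the tier-1 count inside each mini-tree plus the already-bounded number of mini-tree blocks.
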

\begin{proof}%
	We focus on tier 1; tier 2 is similar.
	Lemma~\ref{thm:tree-slabbing} already contains all ingredients: A skinny-slab subtree
	consists of an entire slab, so its nodes appear contiguous in level order.
	Each skinny mini tree thus contributes only 1 level-order changer, for a total of $\Oh(n/H)$
	For the fat subtrees, each level appears contiguously in level order, and within a level,
	the nodes from one mini tree form at most 3 intervals: one gap can result from a child
	of the root that is in another subtree, splitting the list of root children into two intervals,
	and a second gap can result from the single external edge.
	The other connections to other mini trees are through s-nodes, 
	and hence all lie on the same level.
	So each fat mini tree contributes at most 3 changers per level it spans, for a total of
	$\Oh(H\cdot n/B)$ level-order changers.
\end{proof}
With that preparation done, we proceed similarly as for preorder.

\subparagraph{Select}
Given the level-order rank $i$, find $\tau(w_i)$.
We store $\tau_1(w_1),\ldots,\tau_1(w_n)$ in a piece-wise constant array,
using the same technique as for preorder
(compressed bitvector for changers, explicit values at changers),
and similarly for $\tau_2(w_1),\ldots,\tau_2(w_n)$.
Both require $o(n)$ bits.

For $\tau_3$, we have to take an extra step as we don't visit nodes in preorder now.
But we can store the micro-tree-local \emph{level-order} rank $j'$ at all
tier-2 level-order changers and compute the distance $j''$ of $w_i$ from its
tier-2 changer.
The sum $j'+j''$ is the micro-tree-local level-order rank of $w_i$, which we
translate to $\tau_3(w_i)$ using the lookup table.

\subparagraph{Rank}
Given a node $v$ by $\tau$-name, we now seek the $i$ with $v=w_i$.
We compute $i$
as $j + (j'-j) + (i-j')$ for $w_j$ and $w_{j'}$ the tier-1 resp.\ tier-2 
level-order changers of $v=w_i$; 
(this is similar as for preorder rank above).

From the micro-tree lookup table, we obtain $\tau_3(w_{j'})$
and the level-order distance to~$v$.
For tier-2 changers, we store the mapping from $\tau$ to distance
(in level order) to their tier-1 changers, 
as well as $\langle \tau_2,\tau_3\rangle$ of their tier-1 changers.
Finally, for tier-1 changers, we map $\tau$ to their lever-order ranks.
That determines all summands for $i$.

\subsection{Previous Internal Node in Level Order}

Given $\tau(v)$, find $\TrPrevInt(v) = \tau(w)$, 
where $w$ is the last non-leaf node ($\TrDeg(w)>0$) preceding $v$ in level order.
In the micro-tree lookup table, we store whether 
there is an internal node to the left of $v$ inside the micro-tree, 
and if so, its $\tau_3$.
If $w$ does not lie in $\mu^{\tau_1(v)}_{\tau_2(v)}$,
we get $v$'s tier-2 level-order changer $u'$ from the lookup table,
for which we store whether there is an internal node to the left of $u'$ inside the micro-tree, 
and if so, store its $\langle\tau_2,\tau_3\rangle$.
If $w$ is also not in $\mu^{\tau_1(v)}$,
we move to $u'$'s tier-1 level-order changer 
($\langle\tau_2(u),\tau_3(u)\rangle$ is stored for $u'$).
At tier-1 changers $u$, we store $\TrPrevInt(u)$ directly.

\medskip
\noindent
Combining our work in Sections~\ref{sec:slabbing}, \ref{sec:operations}, 
\ifconf{%
	and the appendix of the full version of this paper~\cite{HeMunroNekrichWildWuArXiv}%
}{%
	and \wref{app:more-operations},%
}
we have our first result:
\begin{theorem}[Succinct trees]
\label{thm:levelorder-trees}
An ordinal tree on $n$ nodes can be represented in $2n + o(n)$ bits to support all the tree operations listed in \wref{tab:operations} in $O(1)$ time. 
\end{theorem}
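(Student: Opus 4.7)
The plan is to assemble the theorem from the decomposition of \wref{thm:tree-slabbing}, the preorder and level-order rank/select machinery developed in \wref{sec:operations}, and standard tree-covering protocols for the operations still to be covered. First I would verify the space bound. The dominant contribution is the per-micro-tree \BP encoding stored in $\tau$-order: the \BP strings sum to $2\sum_{i,j}|\mu^i_j|$ bits, and since promoted copies at both tiers inflate the total node count by only $O(n/H) + O(n/H') = o(n)$, this is $2n + o(n)$. Every auxiliary structure introduced so far~-- the tier-1/tier-2 preorder-changer and level-order-changer compressed bitvectors, the arrays indexed by changers storing $\tau$-coordinates, ranks, and distances, and the Four-Russians lookup table indexed by micro-tree types~-- was designed to occupy $o(n)$ bits, so the sum yields $2n+o(n)$.

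Next I would dispatch the operations in \wref{tab:operations} that have not yet been treated. Operations answerable entirely inside one micro tree (e.g., $\TrDeg$, $\TrChildRank$, local subtree shapes) are resolved by a lookup into the type table. Operations such as $\TrParent$, $\TrChild$, $\TrNbDesc$, $\LCA$, $\TrDepth$, $\TrLevAnc$, $\TrHeight$, $\TrLeftLeaf$, $\TrRightLeaf$, $\TrLeafRank$, $\TrLeafSel$, and the remaining $\TrRank_X/\TrSelect_X$ for $X\in\{\pre,\post,\inorder,\DFUDS\}$ are handled by lifting the standard \TC protocols of \cite{GearyRamanRaman2006,HeMunroRao2012,FarzanMunro2014} to our two-tier setting: first resolve locally inside the micro tree via the type table; if the query escapes, route through $O(1)$ precomputed pointers attached to promoted nodes, subtree roots, and external edges into the parent mini tree or a sibling subtree. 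Glued (dummy-root) subtrees require only that the dummy be flagged in the type so it is invisible to user counts. $\TrNextInt$ is symmetric to $\TrPrevInt$; $\TrLevelLeft(\ell)$ and $\TrLevelRight(\ell)$ reduce to a level-order select on the first (last) node of depth $\ell$, which can be tabulated in $o(n)$ bits by exploiting that slabs partition the depths into horizontal strips; $\TrLevelPred$ and $\TrLevelSucc$ then follow from $\TrRank_{\lvl}$ and $\TrSelect_{\lvl}$.

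The main obstacle will be checking that every level-order operation composes correctly across the two tiers without inflating the changer counts beyond what \wref{lem:levelorder-changer} permits: the bounds of $O(n/\log^2 n)$ tier-1 and $O(n/(\log\log n)^2)$ tier-2 level-order changers are exactly what allow $O(\log n)$ resp.\ $O(\log\log n)$ bits per changer in $o(n)$ total space. For this to go through, each level-order operation must reduce to a constant number of $\tau$-name lookups at the node itself, its tier-2 changer, and its tier-1 changer, with all cross-subtree navigation routed through the compact changer arrays. I would verify tier-by-tier that this local-to-global reduction holds for every such operation, drawing in particular on the fact that fat slabs cross at most $H$ levels and skinny slabs appear contiguously in level order. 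Once this bookkeeping is confirmed, the theorem is immediate.
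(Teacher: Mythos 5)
Your proposal is correct and takes essentially the same route as the paper: the identical $2n+o(n)$ space accounting (concatenated micro-tree \BP plus $o(n)$ changer bitvectors, changer arrays, and the lookup table), and the same strategy of answering each remaining operation locally via the micro-tree type table and otherwise lifting the standard \TC protocols of Geary et al., He et al., and Farzan--Munro across the two tiers~-- which is exactly what the paper's appendix does, e.g.\ a level-ancestor structure over tier-1 s-nodes with jump pointers for $\TrLevAnc$, and $\TrRank_{\lvl}$/$\TrSelect_{\lvl}$ for $\TrLevelPred$/$\TrLevelSucc$. The only cosmetic deviation is your tabulation idea for $\TrLevelLeft$/$\TrLevelRight$, where the paper simply reuses the previous \TC solution unchanged; this does not change the overall argument.
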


\section{Distance Oracles and Interval Graph Representations}

\label{sec:distance-oracles}

In this section, we present new time- and space-efficient distance oracles
for interval graphs and related classes.
Here (and throughout this paper), we assume an interval realization of the graph
$G=([n],E)$ is given where all endpoints are disjoint and lie in $[2n]$;
such can be computed efficiently from $G$~\cite{AcanChakrabortyJoRao19}.
Vertices of an interval graph are labeled $1,\ldots,n$, 
sorted by the left endpoints of their intervals.

\subsection{Distances in Interval Graphs}

We first describe how to augment an interval-graph representation  
with $\Oh(n)$ additional bits of space to support $\GDistance$ in constant time.
Our distance oracles are based on the graph data structures of 
Acan et al.~\cite{AcanChakrabortyJoRao19};
we recall their result for interval graphs.
\begin{lemma}[Succinct interval graphs, {\cite{AcanChakrabortyJoRao19}}]
\label{lem:succinct-interval}
	An interval graph can be represented using $n\lg n + (3+\epsilon)n + o(n)$ bits to support {\GAdjacent} and \GDegree in $O(1)$ time, \GNeighbor in $O(\GDegree(v))$ time and $\GSPath(u,v)$ in $O(\GDistance(u,v))$ time. Moreover, the interval $I_v = [\ell_v,r_v]\in[2n]^2$ representing a vertex can be retrieved in $O(1)$ time.%
\footnote{
	Note that the arXiv version~\cite{AcanChakrabortyJoRao19arxiv} of 
	\cite{AcanChakrabortyJoRao19} erroneously claims a space usage of 
	$n\lg n + (2+\epsilon)n + o(n)$ bits for their data structure. 
	Interestingly, it is indeed
	possible to reduce the space
	to that by storing $r_1,\ldots,r_n \in [2n]$, 
	the right endpoints, in rank-reduced form, $R[1..n]$,
	(a permutation)
	and using $r_i = \selop_1(S,R[i])$.%
}
\end{lemma}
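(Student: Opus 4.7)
The plan is to follow the construction of Acan et al., using the endpoint structure of the interval realization as the primary index. I would store the length-$2n$ bitvector $S$ with $S[i]=1$ iff position $i\in[2n]$ is a right endpoint; by \wref{lem:compressed-bit-vectors}, $S$ occupies $\lg\binom{2n}{n}+o(n)=2n+o(n)$ bits and supports \rankop and \selop in $O(1)$ time. Because vertices are labeled by left-endpoint rank, $\ell_v=\selop_0(S,v)$ is immediate. The pairing between left and right endpoints is stored as the permutation $R[1..n]$ with $R[v]=\rankop_1(S,r_v)$, represented by the Munro--Raman succinct permutation in $n\lg n + O(n)$ bits with $O(1)$-time access to both $R$ and $R^{-1}$; the $O(n)$ slack absorbs into the $(3+\epsilon)n+o(n)$ term. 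Then $r_v=\selop_1(S,R[v])$ and $I_v=[\ell_v,r_v]$ are obtained in $O(1)$.

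All navigation queries reduce to arithmetic on $S$ via the adjacency characterization that two vertices $u<v$ (so $\ell_u<\ell_v$) overlap iff $\ell_v<r_u$. Hence \GAdjacent$(u,v)$ is one comparison between two constant-time endpoint lookups. For \GDegree$(v)$, I would decompose the neighborhood into the $u>v$ with $\ell_u\in(\ell_v,r_v)$, counted by $\rankop_0(S,r_v)-v$, and the $u<v$ with $r_u>\ell_v$; the latter count is $(v-1)-\rankop_1(S,\ell_v)$, since the $u<v$ with $r_u\le\ell_v$ are exactly those right endpoints contributing to $\rankop_1(S,\ell_v)$. Both are $O(1)$. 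For \GNeighbor$(v)$, I would enumerate the first class by successive $\selop_0$ calls inside $(\ell_v,r_v)$ — each directly giving a vertex id — and the second class by scanning right-endpoint positions in $(\ell_v,2n]$ with $\selop_1$, converting each position $p$ to its owning vertex $R^{-1}(\rankop_1(S,p))$ and emitting only those with index $<v$; the total work is $O(\GDegree(v))$.

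For \GSPath$(u,v)$ with $u\le v$, I would use the classical greedy-jump strategy for interval graphs: starting at $c\leftarrow u$, repeatedly replace $c$ by its neighbor with the largest right endpoint, stopping once $v$ becomes adjacent. To implement each jump in $O(1)$ within the claimed space, I would augment $R$ with a Fischer--Heun range-maximum index in $o(n)$ bits; by the adjacency characterization, the candidate neighbors at step $c$ form a contiguous vertex-index range determined in $O(1)$ by one \rankop call on $S$, from which the maximum-$r_w$ neighbor is one RMQ plus one $R$ lookup. The main obstacle will be the correctness argument for greedy-jump — a standard exchange argument showing that reaching the rightmost right endpoint at each step never overshoots an optimal path — together with noting that Fischer--Heun RMQ is value-oblivious, so it can be layered on top of the succinct $R$ without duplicating it. Combining all of this yields $O(\GDistance(u,v))$ time for \GSPath and completes the claim.
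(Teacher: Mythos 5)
First, note that the paper does not prove this lemma at all: it is imported verbatim from Acan et al.\ \cite{AcanChakrabortyJoRao19}, with the footnote only remarking on the space constant. So you are reconstructing their construction, and your reconstruction of $S$, the adjacency test $\ell_v<r_u$, and the two-sided degree formula is correct. There are, however, two genuine gaps. The first is the enumeration of the neighbors $u<v$ in \GNeighbor: scanning all right-endpoint positions in $(\ell_v,2n]$ and discarding those owned by vertices $\ge v$ is not output-sensitive. The discarded positions are right endpoints of vertices $w>v$ with $\ell_w>r_v$, and there can be $\Theta(n)$ of them while $\GDegree(v)=0$ (take $n$ pairwise disjoint intervals and $v=1$). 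Since the smaller-indexed neighbors of $v$ are exactly the intervals stabbing the point $\ell_v$, and these are \emph{not} a contiguous index range, you need an output-sensitive mechanism; the standard one (and the one Acan et al.\ use) is the range-maximum trick you already deploy for \GSPath: report $w=\arg\max\{r_w: w\in[1,v-1]\}$ if $r_w>\ell_v$ and recurse on $[1,w-1]$ and $[w+1,v-1]$, stopping whenever the maximum fails the threshold. This yields $O(\GDegree(v))$ time and makes $R^{-1}$ unnecessary.

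The second gap is the space accounting for the permutation: a representation of $R$ supporting \emph{both} $R$ and $R^{-1}$ in $O(1)$ time is not known in $n\lg n+O(n)$ bits; the Munro--Raman--Raman--Rao structure needs $(1+\epsilon)n\lg n$ bits, whose redundancy $\epsilon n\lg n$ is $\omega(n)$ and hence does not fit inside the $(3+\epsilon)n+o(n)$ second-order term. Once the neighbor enumeration is repaired as above, you never need $R^{-1}$, so you can store $R$ as a plain array (this is exactly the $(2+\epsilon)n$ variant in the footnote) or store $r_v\in[2n]$ directly for the $(3+\epsilon)n$ bound. Two smaller points: in \GSPath the contiguous candidate range at vertex $c$ should be $[1,\rankop_0(S,r_c)]$ rather than only the larger-indexed neighbors (a containing interval $w<c$ may have the largest right endpoint), and the greedy-jump correctness that you defer is indeed classical but must be stated for this full range.
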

As interval graphs are a subclass of chordal graphs, we will be using the algorithm of Munro and Wu~\cite{MunroWu18} to compute distances. For a vertex $v$, denote the \emph{bag} of $v$ by $B_v = \{ w : \ell_v \in I_w \}$, \ie, the set of vertices whose intervals contain the left endpoint of $v$'s interval. As in~\cite{MunroWu18}, we define $s_v = \min B_v$. 
The shortest path algorithm given in~\cite{MunroWu18} is similar to the one in~\cite{AcanChakrabortyJoRao19}. 
Given $u < v$, we compute the shortest path by
checking if $u$ and $v$ are adjacent. If so, add $u$ to the path;
otherwise, add $s_v$ to the path and recursively find $\GSPath(s_v, u)$.

As the next step for every vertex $v$ is the same regardless of destination $u$, 
we can store this unique step for each vertex as the parent pointer of a tree.
We construct a tree $T$ as follows: 
for every vertex $v=1,\ldots,n$ (in that order), 
add node $v$ to the tree as the rightmost (last) child of $s_v$;
see \wref{fig:interval_graph} for an example. 
The node $v=1$ is the root of the tree. 
Thus we have identified each vertex of $G$ with a node of $T$.
This correspondence is captured by \wref{lem:correspondence} below.

\begin{figure}[htb]
	\scalebox{.85}{%
	\begin{tikzpicture}[xscale=0.6, yscale=.5,inner sep=2.5pt]
		\node at (1,0) {5};
		\node at (1,1) {4};
		\node at (1,2) {3};
		\node at (1,3) {2};
		\node at (1,4) {1};
		\node at (1,-1) {6};
		\foreach \x in {1,...,12} {
			\draw[thin,dotted] (\x+1,4) -- ++(0,-6) node[below] {\tiny \x} ;
		}
		
		\begin{scope}[very thick,|-|]
		\draw[black] (2,4) -- (7,4);	
		\draw[black] (3,3) -- (5,3);
		\draw[black] (4,2) -- (9,2);		
		\draw[black] (6,1) -- (11,1);		
		\draw[black] (8,0) -- (12,0);		
		\draw[black] (10,-1) -- (13,-1); 
		\end{scope}
		
		\node[draw,circle,black] (1) at (16,1) {1};
		\node[draw,circle,black] (2) at (16,3) {2};
		\node[draw,circle,black] (3) at (18,1) {3};
		\node[draw,circle,black] (4) at (18,3) {4};
		\node[draw,circle,black] (5) at (20,3) {5};
		\node[draw,circle,black] (6) at (20,1) {6};
		\draw[black] (1) -- (2) -- (3) -- (4) -- (1) -- (3);
		\draw[black] (2) --(4);	
		\draw[black] (3) -- (5) -- (4);
		\draw[black] (5) -- (6) -- (4); 
		
		\node[draw,circle,black] (7) at (25,4) {1};
		\node[draw,circle,black] (8) at (23,2) {2};
		\node[draw,circle,black] (9) at (25,2) {3};
		\node[draw,circle,black] (10) at (27,2) {4};
		\node[draw,circle,black] (11) at (25,0) {5};
		\node[draw,circle,black] (12) at (27,0) {6};
		\draw[black] (7) -- (8);
		\draw[black] (7) -- (9);
		\draw[black] (7) -- (10);
		\draw[black] (10) -- (12);
		\draw[black] (9) -- (11);
	\end{tikzpicture}%
	}
	\caption{An Interval Graph (middle) with Interval Representation (left), and distance tree constructed (right).}
	\label{fig:interval_graph}
\end{figure}
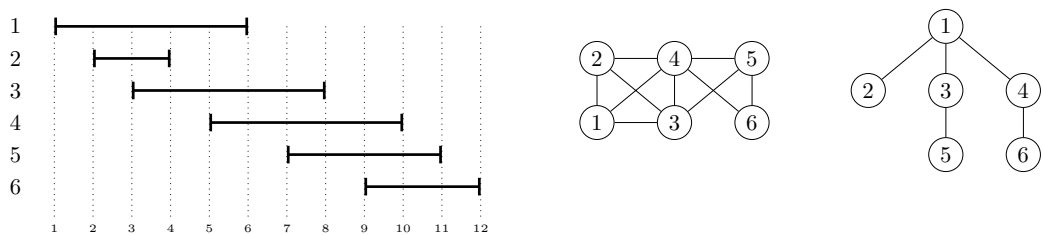

We note that the above construction is undefined for a disconnected graph, as the leftmost interval of a component would have an undefined parent. The simplest way to solve this is to set the parent of such a vertex $v$ as $v-1$ (that is we add the edge between them). We will also need to include a length $n$ bit-vector, where the $i$th entry is a 1 if vertex $i$ is the first vertex of a component (to keep track of the edges we added). Any distance queries (between $u$ and $v$) will first check if the two vertices are in the same component by performing a rank query on the bit-vector at indices $u$ and $v$, and check that they are the same. Similarly for adjacency and neighborhood queries; we will need to check if vertices are the first vertex of a component, and if so, make sure the added edge is not reported.

\begin{lemma}[Distance tree BFS]
	\label{lem:correspondence}
	Let $a_1,a_2,\ldots, a_n$ be a \emph{breadth-first} traversal of $T$. 
	Then the corresponding vertices of $G$ are $1,2\ldots n$.
\end{lemma}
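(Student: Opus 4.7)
The crux of the argument is a ``parent monotonicity'' claim: whenever $s_v \le u < v$, we have $s_u \le s_v$. To see this, note that $s_v \le u$ gives $\ell_{s_v} \le \ell_u$ (vertices are sorted by left endpoint), while $s_v \in B_v$ gives $\ell_v \le r_{s_v}$, and $u < v$ gives $\ell_u \le \ell_v$; combining, $\ell_u \in [\ell_{s_v}, r_{s_v}] = I_{s_v}$, so $s_v \in B_u$ and thus $s_u = \min B_u \le s_v$. The disconnected-graph fix (setting the parent of a component's leftmost vertex $v$ to $v-1$) is compatible: the case $s_v = v-1$ forces $u = v-1$, and $s_u \le u = s_v$ trivially. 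This geometric fact is the only place where the interval structure of $G$ is used; everything else is inductive bookkeeping.

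\textbf{Depth monotonicity.} By strong induction on $v$, I would establish $\TrDepth(u) \le \TrDepth(v)$ for $u < v$. If $u < s_v$, the induction hypothesis on $s_v$ gives $\TrDepth(u) \le \TrDepth(s_v) = \TrDepth(v) - 1$. If $s_v \le u$, the key lemma yields $s_u \le s_v$, and by induction $\TrDepth(s_u) \le \TrDepth(s_v)$, so $\TrDepth(u) = \TrDepth(s_u) + 1 \le \TrDepth(s_v) + 1 = \TrDepth(v)$.

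\textbf{Intra-level order and BFS conclusion.} For $u < v$ at the same depth $d$, I would show either $s_u = s_v$, or $s_u < s_v$. The case $u < s_v$ is impossible: then $s_v < v$ with $\TrDepth(s_v) = d - 1$, and depth monotonicity would force $\TrDepth(u) \le d - 1 < d$, contradicting $\TrDepth(u) = d$. Thus $s_v \le u$, the key lemma gives $s_u \le s_v$, and strictness follows whenever the parents differ. Now a final induction on depth closes the proof: at depth $0$ only vertex $1$ appears; assuming the depth-$(d-1)$ nodes are dequeued in increasing vertex index, BFS enumerates the depth-$d$ nodes by (parent's BFS rank, sibling position). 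Since nodes are inserted as \emph{rightmost} children in order $1,2,\ldots,n$, siblings appear left-to-right in increasing index; and between distinct parents, $u < v$ forces $s_u < s_v$, so $u$'s parent is dequeued earlier. Hence depth-$d$ nodes are visited in increasing vertex order, completing the induction.
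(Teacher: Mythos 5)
Your proposal is correct and rests on the same core fact as the paper's own proof: an index inversion between a vertex and its parent would give the chain $\ell_{s_v}\le\ell_{s_u}\le\ell_u\le\ell_v\le r_{s_v}$, placing $s_v$ in $B_u$ and contradicting $s_u=\min B_u$. The only difference is presentational: you make explicit (via depth monotonicity and the induction over levels) the bookkeeping that the paper compresses into the assertion that a BFS inversion forces $s_v<s_u$, which is a reasonable elaboration rather than a different route.
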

\begin{proof}%
	First note that it immediately follows from the incremental construction of $T$ in level order that
	the node with largest index inserted so far is always the rightmost node on the deepest level of $T$.
	So if the graph is disconnected, our procedure above does not change the order of the vertices in level order, nor the order of the vertices in $G$. 
	So we may assume that the graph is connected.
	
	For vertices $u < v$, we will show that the node in $T$ corresponding to $u$ appears before the corresponding node to $v$ in $T$ in level order.
	
	Suppose by contradiction that it is not. Thus we must have that $s_v < s_u$ in order for it to be before $u$ in the breadth-first ordering. If $s_v = s_u$, then they are siblings and $v$ is added to the right of $u$ by construction.
	
	Therefore, we have the following facts: i) $\ell_v > \ell_u$ as $v > u$, ii) $\ell_v \in I_{s_v}$ by definition of $s_v$, iii) $\ell_u \in I_{s_u}$ by definition of $s_u$, and iv) $\ell_{s_v} < \ell_{s_u}$ as $s_v < s_u$.
	Thus we have $\ell_{s_v} < \ell_{s_u} < \ell_u < \ell_v < r_{s_v}$, and thus $\ell_u \in I_{s_v}$. By definition, $s_v \in B_u$ which contradicts the fact that $s_u = \min B_u$.
\end{proof}
With this correspondence, we will abuse notation when the context is clear and refer to both the vertex in the graph and the corresponding node in the tree by $v$. Any conversion that needs to be done will be done implicitly using $\TrRank_{\lvl}$ and $\TrSelect_{\lvl}$.
Now consider the shortest path computation for $u < v$. 
The only candidates potentially adjacent to $u$ are the ancestors of $v$ at depths $\TrDepth(u)-1$, $\TrDepth(u)$, and $\TrDepth(u)+1$. The ancestor $z$ of $v$ at depth $\TrDepth(u) + 2$ cannot be adjacent to $u$ as $w=\TrParent(z) > u$, and $\TrParent(z)$ is defined as the smallest node adjacent to $z$.
Thus the distance algorithm reduces to the following: For vertices $u < v$, compute $w=\TrLevAnc(v, \TrDepth(u)+1)$, the ancestor of $v$ at depth $\TrDepth(u)+1$. Find the distance between $u$ and $w$ using the $\GSPath$ algorithm. This is at most 3 steps, so in $O(1)$ time. Finally take the sum of the distances, one from the difference in depth and the other from the $\GSPath$ algorithm.
The extra space needed is to store the tree $T$, using $2n + o(n)$ bits,
and for disconnected graphs, the component bitvector.

The results described above are summarized in the following theorem:
\begin{theorem}[Succinct interval graphs with distance]
\label{thm:succinct-interval-graphs}
	An interval graph $G$ can be represented using $n\lg n + (5+\epsilon)n + o(n)$ bits to support 
	\GAdjacent, \GDegree and \GDistance in $O(1)$ time, 
	\GNeighbor in $O(\GDegree(v)+1)$ time, and 
	$\GSPath(u,v)$ in $O(\GDistance(u,v)+1)$ time.
	If $G$ is disconnected, the space needed is $n\lg n + (6+\epsilon)n + o(n)$ bits.
\end{theorem}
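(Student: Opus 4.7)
The plan is to combine three ingredients already established: the succinct interval graph representation of Acan et al.\ (\wref{lem:succinct-interval}), the level-order-enabled succinct tree of \wref{thm:levelorder-trees}, and the distance-tree correspondence of \wref{lem:correspondence}. First I would store the graph $G$ using \wref{lem:succinct-interval}, which already supplies \GAdjacent, \GDegree, \GNeighbor, \GSPath in the claimed bounds and lets us recover each interval $I_v=[\ell_v,r_v]$ in constant time. This costs $n\lg n + (3+\epsilon)n + o(n)$ bits.

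Next, I would augment this with the tree $T$ described just before the theorem: node $v$ is inserted as the rightmost child of $s_v=\min B_v$, processing $v=1,\dots,n$ in order. Store $T$ with \wref{thm:levelorder-trees}, adding $2n+o(n)$ bits; crucially, this representation supports $\TrRank_{\lvl}$, $\TrSelect_{\lvl}$, $\TrDepth$, and $\TrLevAnc$ all in $O(1)$. By \wref{lem:correspondence}, the BFS order of $T$ coincides with the vertex ordering of $G$, so converting between a graph vertex index and its tree node is a single call to $\TrRank_{\lvl}$ or $\TrSelect_{\lvl}$.

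To answer $\GDistance(u,v)$ with $u<v$ in $O(1)$, I would proceed as in the discussion preceding the theorem: compute $w=\TrLevAnc(v,\TrDepth(v)-\TrDepth(u)-1)$, the ancestor of $v$ at depth $\TrDepth(u)+1$; then
\[
  \GDistance(u,v) \;=\; \bigl(\TrDepth(v)-\TrDepth(u)-1\bigr) \;+\; \GDistance(u,w).
\]
The correctness of this reduction is the main content to verify: by the parent rule $s_z=\min B_z$ any ancestor of $v$ at depth $\TrDepth(u)+2$ or greater has index strictly larger than $u$, so it cannot be adjacent to $u$; hence a shortest path from $v$ must pass through an ancestor at depth $\TrDepth(u)-1$, $\TrDepth(u)$, or $\TrDepth(u)+1$, and $w$ witnesses the earliest such ancestor (whose distance to $u$ is at most $3$). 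The remaining local distance $\GDistance(u,w)\in\{1,2,3\}$ is determined by a constant number of \GAdjacent queries on $u$, $w$, $\TrParent(w)$, and $\TrLevAnc(w,2)$, each $O(1)$ by \wref{lem:succinct-interval} and \wref{thm:levelorder-trees}. The other operations inherit their complexity directly from \wref{lem:succinct-interval}.

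Finally, for the disconnected case, I would store an extra compressed bitvector (\wref{lem:compressed-bit-vectors}) of length $n$ marking the first vertex of each connected component, as described in the text; this adds at most $n+o(n)$ bits, giving $n\lg n+(6+\epsilon)n+o(n)$. All queries first check via $\rankop$ on this bitvector whether $u$ and $v$ lie in the same component (returning $\infty$ otherwise) and filter out the artificial edges we added to make $T$ well-defined, preserving all claimed running times. The main obstacle is the reduction argument of the previous paragraph, but it is essentially immediate from \wref{lem:correspondence} together with the observation about $\min B_z$ made just before the theorem statement.
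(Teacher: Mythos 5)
Your proposal is correct and follows essentially the same route as the paper: store the Acan et al.\ representation ($n\lg n+(3+\epsilon)n+o(n)$ bits), add the distance tree via the new level-order-capable succinct tree ($2n+o(n)$ bits) with the BFS correspondence lemma, answer $\GDistance(u,v)$ as the depth difference down to the ancestor $w$ of $v$ at depth $\TrDepth(u)+1$ plus an $O(1)$ local distance obtained from a constant number of adjacency checks, and add an $n$-bit component bitvector in the disconnected case. One wording slip worth fixing: the exclusion of deep ancestors holds not because an ancestor $z$ at depth $\ge\TrDepth(u)+2$ has index larger than $u$, but because its \emph{parent} $s_z=\min B_z$ (its minimum-label neighbor) already has index larger than $u$, so $u\notin B_z$ and $z$ cannot be adjacent to $u$ -- which is exactly the argument the paper gives.
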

Finally we note that this augmentation can without changes 
be applied to subclasses of interval graphs; 
we thus obtain the following theorem:
\begin{theorem}[Succinct $k$-proper/-improper interval graphs with distance]
\label{thm:succinct-k-proper-interval-graphs}
	~\\A $k$-proper ($k$-improper) interval graph%
	\footnote{%
		We note that Klavík et al.~\cite{KlavikOtachiSejnoha2019} consider a closely 
		related class of interval graphs, $k$-\textsf{NestedINT} that is similar to 
		(and contains) Acan et al.'s~\cite{AcanChakrabortyJoRao19} class of 
		$(k-1)$-improper interval graphs, 
		but defines $k$ as the length of longest chain of pairwise nested intervals.
		The data structures of Acan et al.\ directly apply to this notion
		by adapting the definition of $S'$.
	}
	$G$ can be represented using $2n\lg k + 8n + o(n\log k)$ bits to support 
	$\GDegree$, $\GAdjacent$, $\GDistance$ in $O(\log\log k)$ time, 
	$\GNeighbor$ in $O(\log\log k \cdot(\GDegree(v)+1))$ time and 
	$\GSPath(u,v)$ in $O(\log\log k  \cdot(\GDistance(u,v)+1))$ time.
	If $G$ is disconnected, the space needed is $2n\lg k + 9n + o(n\log k)$ bits.
\end{theorem}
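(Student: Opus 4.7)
The plan is to mirror the proof of \wref{thm:succinct-interval-graphs} exactly, replacing the ingredient of \wref{lem:succinct-interval} (Acan et al.'s representation of general interval graphs) with their corresponding representation of $k$-(im)proper interval graphs, and verifying that nothing in the distance-oracle construction uses properties beyond being an interval graph.

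First I would invoke Acan et al.'s result that a $k$-(im)proper interval graph on $n$ vertices can be represented in $2n\lg k + 6n + o(n\log k)$ bits so that $\GDegree$ and $\GAdjacent$ are answered in $O(\log\log k)$ time, $\GNeighbor$ in $O(\log\log k\cdot\GDegree(v))$ time, and $\GSPath(u,v)$ in $O(\log\log k\cdot|\GSPath(u,v)|)$ time. Crucially, this representation still supports retrieving the interval $I_v$ of any vertex and therefore $s_v = \min B_v$ in $O(\log\log k)$ time, which is all that is needed to construct the distance tree $T$ of \wref{sec:distance-oracles}. Since the tree construction (rightmost-child attachments ordered by left endpoints) and the correspondence \wref{lem:correspondence} are purely combinatorial statements about interval graphs, they transfer without change to any subclass.

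Next I would augment the base representation with the succinct tree $T$ built on the $n$ vertices as the parent-pointer tree with $\TrParent(v)=s_v$, stored using \wref{thm:levelorder-trees} in $2n+o(n)$ bits with $O(1)$-time level-ancestor and $\TrRank_{\lvl}/\TrSelect_{\lvl}$ queries. The distance algorithm from the proof of \wref{thm:succinct-interval-graphs} then carries over verbatim: for $u<v$, compute $w=\TrLevAnc(v,\TrDepth(u)+1)$ in $O(1)$ time, then combine the depth difference with the constant-length $\GSPath$ between $u$ and $w$; each adjacency check in this constant-length suffix now costs $O(\log\log k)$ instead of $O(1)$, giving $\GDistance$ in $O(\log\log k)$. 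The overhead for $\GNeighbor$ and $\GSPath$ inherits the same per-step factor $O(\log\log k)$. Summing the bit counts yields $2n\lg k + 6n + 2n + o(n\log k) = 2n\lg k + 8n + o(n\log k)$ bits. For disconnected graphs, I would add the component-start bitvector exactly as in the proof of \wref{thm:succinct-interval-graphs}, contributing another $n+o(n)$ bits and preserving all query times (the extra checks are rank queries on a plain bitvector).

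The only potentially non-trivial step is confirming that Acan et al.'s data structure indeed exposes $s_v$ (or equivalently $I_v$ and a left-endpoint predecessor) within the claimed $O(\log\log k)$ time and that it remains compatible with the auxiliary bitvectors used in \wref{lem:correspondence}; this should be immediate by inspection, since their representation already supports $\GSPath$ which internally walks along the $s_v$ pointers. No new lower bound or structural argument is needed, so no separate obstacle beyond this bookkeeping arises.
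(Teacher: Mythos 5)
Your proposal is correct and matches the paper's own argument, which is exactly the observation that the distance-tree augmentation of Theorem~\ref{thm:succinct-interval-graphs} applies unchanged to any subclass of interval graphs: one adds the $2n+o(n)$-bit tree of Theorem~\ref{thm:levelorder-trees} (plus the $n+o(n)$-bit component bitvector when disconnected) on top of Acan et al.'s $2n\lg k+6n+o(n\log k)$-bit structure, with each graph-level adjacency step in the constant-length path check inheriting the $O(\log\log k)$ factor. The space and time bookkeeping you give is the same as the paper's.
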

The additional space is a lower-order term if $k = \omega(1)$.
While Acan et al.'s data structure is not succinct, either, for $k=\Oh(1)$,
a different tailored representation for proper interval graphs ($k=0$)
is presented there.
Here, simply adding our distance tree is not good enough.

\subsection{Succinct Proper Interval Graphs with Distance}
Recall that a proper interval graph is an interval graph that admits an interval representation with no interval
properly contained in another.
As before, each vertex $v$ is associated with an interval $I_v$ and vertices sorted by left endpoints. 
The information-theoretic lower bound for this class of graphs is $2n - O(\log n)$ bits~\cite[Thm.\,12]{Hanlon1982}.
Hanlon also shows that asymptotically, a $0.626578$-fraction of all proper interval graphs
is connected, so the same lower bound holds for connected proper interval graphs.

While adding the distance tree on top of the existing representation is too costly,
our the key insight here is that the graph can be \emph{recovered} from 
the distance tree, and indeed, 
we can answer all graph queries directly on the latter.
Thus for connected proper interval graphs, the representation is succinct, but an extra $n+o(n)$ bits is required for disconnected proper interval graphs in the worst case. However, if the number of components is not too large, say $O(n/\log(n))$ components, our redundancy remains $o(n)$ using \wref{lem:compressed-bit-vectors}. We will assume that the graph is connected, and use the extra steps required as described in the general interval graph case.
First, the neighborhood of a vertex can be succinctly described:
\begin{lemma}
	\label{lem:neighbourhood}
	Let $v$ be a vertex in a proper interval graph. Then there exists vertices $u_1 \le u_2$ such that the (closed) neighborhood of $v$ is equal to 
	the vertices in $[u_1, u_2]$.
\end{lemma}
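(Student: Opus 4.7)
The plan is to exploit the defining property of proper interval graphs: since no interval properly contains another, sorting vertices by left endpoint also sorts them by right endpoint. Concretely, for $i < j$ we have $\ell_i < \ell_j$ (by the chosen labeling), and $r_i < r_j$, for otherwise $I_j \subseteq I_i$.

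I would define $u_1 := \min\{w : I_w \cap I_v \neq \emptyset\}$ and $u_2 := \max\{w : I_w \cap I_v \neq \emptyset\}$; both exist because $v$ itself intersects $I_v$. What remains is to show that every $w$ with $u_1 \le w \le u_2$ lies in the closed neighborhood of $v$, which I would split into the two cases $w \le v$ and $w \ge v$.

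For $w \le v$: since right endpoints are monotone in the labeling, $r_w \ge r_{u_1}$. Moreover $\ell_{u_1} \le \ell_v$ (as $u_1 \le v$ by the leftmost choice, or $u_1 = v$), and since $I_{u_1}$ meets $I_v$, this forces $r_{u_1} \ge \ell_v$. Combining, $r_w \ge \ell_v$ and $\ell_w \le \ell_v$, hence $\ell_v \in I_w$ and $w$ is adjacent to (or equal to) $v$. The case $w \ge v$ is symmetric, using the analogous argument on $u_2$: $\ell_w \le \ell_{u_2} \le r_v$ and $\ell_w \ge \ell_v$, so $\ell_w \in I_v$.

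There is no genuine obstacle here; the only subtlety is to justify cleanly why the ``proper'' condition implies simultaneous monotonicity of both endpoints, which is a short argument by contradiction. Once this is observed, the contiguity of the neighborhood falls out immediately from the two case analyses above.
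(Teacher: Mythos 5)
Your proposal is correct and rests on the same key fact as the paper's proof, namely that in a proper interval graph the labeling by left endpoints also orders the right endpoints, so that a vertex between two neighbors of $v$ must itself intersect $I_v$. The paper phrases this as a local step (if $u_1<v$ is adjacent to $v$ then so is $u_1+1$, via the chain $\ell_{u_1}<\ell_w\le\ell_v<r_{u_1}<r_w$), whereas you argue directly from the extremal neighbors; the difference is purely presentational.
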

\begin{proof}%
	Let $u_1 < v$ be adjacent to $v$. Let $w = u_1 +1$. As $G$ is a proper interval graph, we have the following inequalities: $ \ell_{u_1} < \ell_w \le \ell_v < r_{u_1} < r_w .$
	Thus $I_v$ intersects $I_w$ and $v$ is adjacent to $w$. So the neighborhood of $v$ consisting of vertices with smaller label forms a contiguous interval.
	
	Similarly, the same argument can be made for the vertices with larger labels.
\end{proof}
Let $T$ be the tree constructed in the previous section. We already showed how to compute $\GSPath$ and $\GDistance$ for $G$ (based on an implementation of $\GAdjacent$). We now show how to compute $\GAdjacent$, $\GDegree$ and $\GNeighbor$.

$\GAdjacent$: Let $u < v$. We first check if $v$ is the leftmost node in its component; if so, $u$ and $v$ cannot be adjacent.
Otherwise, we compute $s_v$ (using \TrParent); then $u$ and $v$ are adjacent iff $s_v \le u$.
Correctness follows from the fact that the neighborhood of $v$ is a contiguous interval.

$\GNeighbor$: Let the neighborhood of $v$ be $[u_1,u_2]$. By the definition of $s_v$, we have that $u_1 = s_v$ (unless $v$ is leftmost; then $u_1=v$). Thus it remains to compute $u_2$.
If $v$ is rightmost in its component, $u_2=v$; otherwise we find $u_2$
using the following lemma in $O(1)$ time.

\begin{lemma}
	\label{lem:proper-neighbourhood}
	If $v$ is a leaf, then $u_2 = \TrLastChild(\TrPrevInt(v))$;
	otherwise we have $u_2=\TrLastChild(v)$.
\end{lemma}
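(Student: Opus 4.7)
The plan is to exploit the structural rigidity of proper interval graphs: when vertices are sorted by left endpoints $\ell_1 < \ell_2 < \cdots < \ell_n$, the right endpoints are sorted in the same order, $r_1 < r_2 < \cdots < r_n$ (otherwise some interval would properly contain another). This makes both the graph's right-neighbors and the tree's children at each node easy to read off from the $r$-values.

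The first step is a clean interval characterization of $\TrLastChild$. The children of a node $p$ in $T$ are exactly the vertices $w > p$ with $s_w = p$. Since $s_w = \min\{x : r_x \ge \ell_w\}$, the monotonicity of $r$ gives $s_w = p$ iff $\ell_w \in (r_{p-1}, r_p]$. Moreover, $s_w$ is non-decreasing in $w$, so the children of $p$ are a contiguous range of indices, and thus $\TrLastChild(p) = \max\{w : \ell_w \le r_p\}$~-- precisely the largest vertex in $p$'s closed neighborhood. Combining this with \wref{lem:neighbourhood} (which implies $u_2 = \max\{w : \ell_w \le r_v\}$), the non-leaf case is immediate: $u_2 = \TrLastChild(v)$.

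For the leaf case, I set $p = \TrPrevInt(v)$. The Correspondence Lemma (\wref{lem:correspondence}) tells us that level order in $T$ matches the index order $1, \ldots, n$, so $p$ is the largest index less than $v$ with $\TrDeg(p) > 0$; consequently every vertex in the range $(p, v]$ is a leaf of $T$ (including $v$ itself by assumption). It remains to show that $\max\{w : \ell_w \le r_v\} = \max\{w : \ell_w \le r_p\}$, i.e., that no $w$ has $\ell_w \in (r_p, r_v]$. Suppose such a $w$ existed. Then by sortedness of $r$, $s_w = \min\{x : r_x \ge \ell_w\}$ would satisfy $p < s_w \le v$, making some vertex in $(p, v]$ an internal node of $T$~-- contradicting the leafness just established. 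This yields $u_2 = \TrLastChild(p) = \TrLastChild(\TrPrevInt(v))$.

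The main obstacle is the final step: one must translate the operation $\TrPrevInt(v)$ via the level-order/index-order correspondence into a statement about the \emph{leafness} of the entire index interval $(p, v]$, and then use the sortedness of $r$ to rule out any $w$ with $\ell_w \in (r_p, r_v]$. Without the correspondence from \wref{lem:correspondence}, the interval argument would not connect to $\TrPrevInt$ at all; without the proper-interval monotonicity of $r$, the characterization of $\TrLastChild$ and of $u_2$ would both fail.
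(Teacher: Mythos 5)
Your proof is correct and follows essentially the same route as the paper's: both hinge on the contiguity of neighborhoods (Lemma~\ref{lem:neighbourhood}), the identification of children of $p$ as the vertices $w$ with $s_w=p$, and the observation that every index in $(\TrPrevInt(v),v]$ is a leaf, so no vertex in that range can serve as a parent. The only stylistic difference is that you make the monotonicity of the right endpoints explicit and phrase the argument as an equality of two maxima, whereas the paper argues directly that the candidate $w$ is adjacent to $v$ while $w+1$ is not.
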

\begin{proof}%
	In the case that $v$ is not a leaf in $T$, we claim that $u_2$ is the last child of $v$. Denote this child by $w$. Clearly $v$ is adjacent in $G$ to all of its children by definition. The parent of $w+1$ is larger than $v$, and thus $w+1$ cannot be adjacent to $v$ by the definition of $T$.
	
	If $v$ is a leaf of $T$, we claim that $u_2$ is the last child of the first internal (non-leaf) node before $v$ in level-order. Let $w=\TrLastChild(\TrPrevInt(v))$ denote this node. By definition, $s_w < v$ and $w \ge v$. As the neighborhood of $w$ forms a contiguous interval, $w$ is adjacent to $v$. Now consider $w+1$. By definition of $w$, its level-order successor $w+1$ must have parent $s_{w+1} > v$. Thus by the previous argument, it cannot be adjacent to $v$.
\end{proof}

\noindent
$\GDegree$: $|\GNeighbor(v)| = \GDegree(v)$ can be found in $O(1)$ time by computing $u_2-u_1$ for $u_1,u_2$ from $\GNeighbor(v)$.

\medskip\noindent
The results in this section are summarized in the following theorem;
we note that the succinct representation of neighbors allows to report those
faster than is possible using Acan et al's representation.

\begin{theorem}[Succinct proper interval graphs with distance]
	A connected proper interval graph can be represented in asymptotically optimal $2n + o(n)$ bits while supporting \GAdjacent, \GDegree, \GNeighbor and \GDistance in $O(1)$ time, and $\GSPath(u,v)$ in $O(\GDistance(u,v))$ time.
	A disconnected proper interval graph will use $3n + o(n)$ bits in the worst case; 
	if the number of components is $O(n/\log n)$, then the space is still $2n +o(n)$.
\end{theorem}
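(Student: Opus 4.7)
The plan is to use the distance tree $T$ from \wref{sec:distance-oracles} as the \emph{sole} representation of the graph, discarding the interval endpoints entirely; by \wref{thm:levelorder-trees} this costs $2n+o(n)$ bits. The correctness hinges on the fact that a proper interval graph can be \emph{recovered} from $T$: by \wref{lem:correspondence} the level-order traversal of $T$ coincides with the vertex labeling $1,\ldots,n$, and by \wref{lem:neighbourhood} the neighborhood of each vertex $v$ is a contiguous interval $[u_1, u_2]$, whose endpoints are expressible in terms of tree operations via \wref{lem:proper-neighbourhood}.

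For the connected case, I would verify each operation in turn. For $\GAdjacent(u,v)$ with $u<v$, it suffices to test whether $u \ge s_v = \TrParent(v)$, in $O(1)$ time. For $\GNeighbor(v)$, I return the pair $(u_1,u_2)$ with $u_1=\TrParent(v)$ and $u_2$ given by \wref{lem:proper-neighbourhood} (either $\TrLastChild(v)$ or $\TrLastChild(\TrPrevInt(v))$, depending on whether $v$ is a leaf), each a single $O(1)$ tree operation; $\GDegree(v)=u_2-u_1$ is then one subtraction. For $\GDistance(u,v)$ with $u<v$, I compute the ancestor $w$ of $v$ at depth $\TrDepth(u)+1$ via $\TrLevAnc$, determine the at-most-$3$ steps between $u$ and $w$ by $O(1)$ calls to $\GAdjacent$, and return this plus $\TrDepth(v)-\TrDepth(u)-1$. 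The routine $\GSPath$ unrolls the same recursion, emitting one vertex per tree step, so its total work is $O(\GDistance(u,v))$.

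For disconnected graphs the construction of $T$ is undefined at the leftmost vertex of each component; we patch it (as in the general interval-graph case) by making such a vertex a child of its predecessor, and record the patched edges via an auxiliary length-$n$ bitvector $C$ marking component leaders. Each query routine then first consults $C$ by a rank query so that (a) vertices in different components are never reported as adjacent and (b) the endpoints $u_1, u_2$ of \wref{lem:proper-neighbourhood} are clipped to $v$'s component. An uncompressed $C$ gives $3n+o(n)$ bits in the worst case; when the number of components is $O(n/\log n)$, \wref{lem:compressed-bit-vectors} stores $C$ in $o(n)$ bits, leaving the $2n+o(n)$ total intact.

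The main obstacle I expect is the disconnected case: one must check that the patched edges and the component boundaries do not invalidate \wref{lem:correspondence}, \wref{lem:neighbourhood} or \wref{lem:proper-neighbourhood} \emph{within} a component, and that subtleties such as $\TrPrevInt(v)$ crossing a component boundary (when $v$ is a leftmost leaf of its component) are detected and handled via $C$ before the reported $u_2$ is used. Once these boundary checks are correctly interleaved with the tree-based queries, the connected analysis applies verbatim to each component, yielding the claimed bounds.
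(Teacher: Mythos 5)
Your proposal is correct and follows essentially the same route as the paper: representing the graph solely by the distance tree of \wref{thm:levelorder-trees}, answering \GAdjacent, \GNeighbor, \GDegree via \TrParent, \TrLastChild, \TrPrevInt (Lemmas on contiguous neighborhoods), reusing the interval-graph \GDistance/\GSPath routine, and handling disconnected graphs with a component bitvector that is compressed when there are $O(n/\log n)$ components. The only cosmetic difference is that the paper states the boundary cases (leftmost/rightmost vertex of a component) explicitly in the query descriptions, which you defer to the component-bitvector checks.
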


\subsection{Distances in Circular-Arc Graphs}

We finally show how to extend our distance oracles to circular-arc graphs.
We follow the notation of~\cite{AcanChakrabortyJoRao19} for circular-arc graphs,
in particular, we assume that we are given left and right endpoints of the vertices' arcs
in $[\ell_v,r_v] \in [2n]$ for $v=1,\ldots,n$, all endpoints are distinct, and 
$\ell_1<\cdots<\ell_n$, \ie, vertex ids are by sorted left endpoints.
Moreover, $v$ is a \emph{normal} vertex if $\ell_v < r_v$; otherwise it is a \emph{reversed} vertex
corresponding to the arc $[\ell_v,2n]\cup[1,r_v]$.
We assume that $G$ is connected; if not, $G$ is actually an interval graph, and we can use \wref{thm:succinct-interval-graphs}.

Acan et al.~\cite{AcanChakrabortyJoRao19,AcanChakrabortyJoRao19arxiv} 
describe two succinct data structures for circular-arc graphs: 
one based on succinct point grids (the ``grid version'') that supports all operations
of \wref{lem:succinct-interval}, but each with a $\Theta(\log n / \log \log n)$-factor
overhead in running time 
(see \cite[Thm.\,5]{AcanChakrabortyJoRao19} resp.\ 
\cite[Thm.\,6]{AcanChakrabortyJoRao19arxiv}),
and a second (the ``grid-less version'') that does not support \GDegree (other than by iterating over \GNeighbor),
but handles all other queries in optimal time 
(see \cite[Thm.\,7]{AcanChakrabortyJoRao19arxiv}).
We describe how to augment either of these to also answer \GDistance queries 
(in $\Oh(\log n / \log \log n)$ resp.\ $\Oh(1)$ time) 
using $\Oh(n)$ additional bits of space.

The idea of our distance oracle is to simulate access to the interval graph 
obtained by ``unrolling'' $G$ \emph{twice}, and then use the distance
algorithm for interval graphs therein.
\wref{fig:circular-arc-unrolled} shows an example.

\begin{figure}[tbh]
	\resizebox{\linewidth}!{\includegraphics{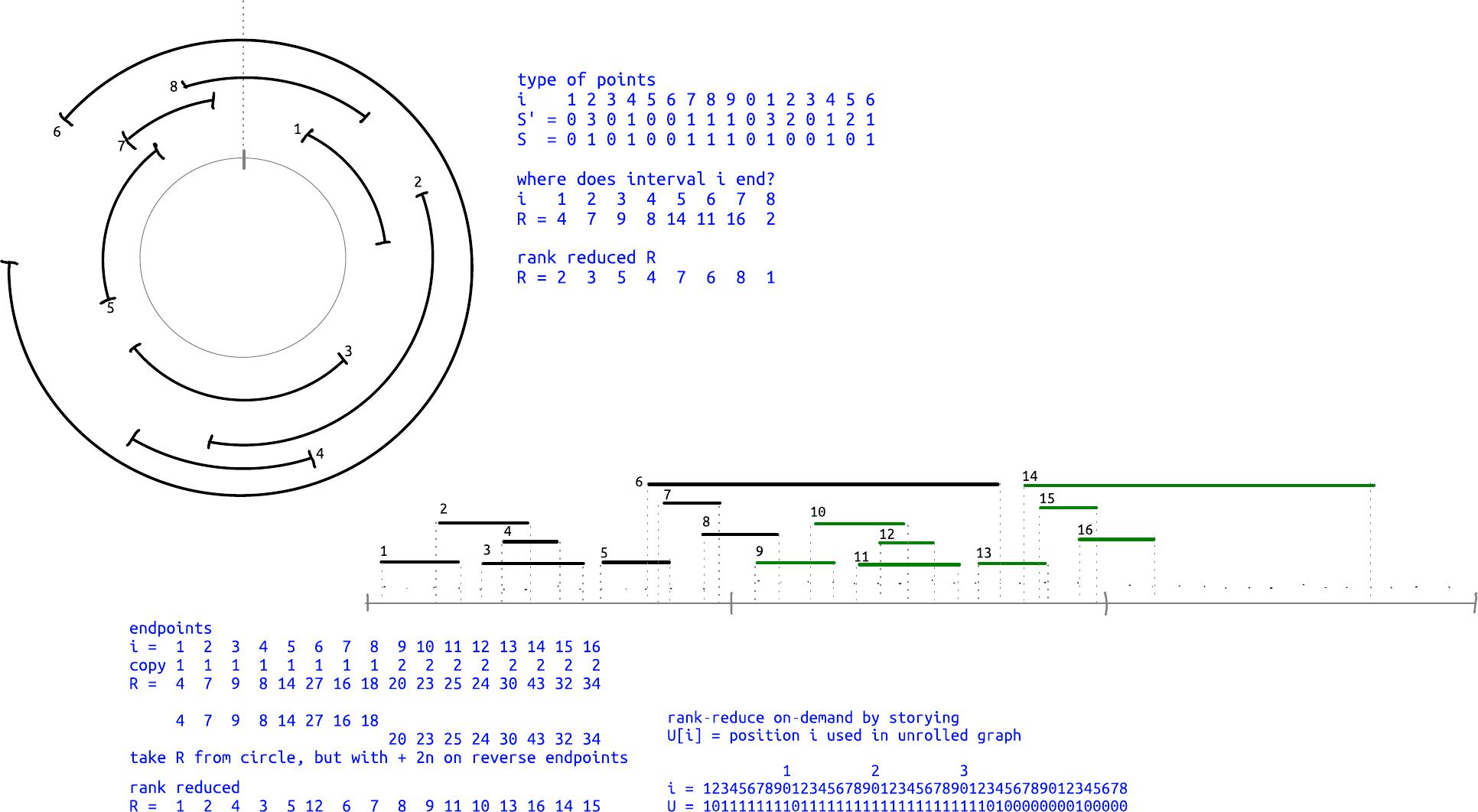}}
	\caption{%
		An examplary circular-arc graph and its twice-unrolled interval graph.
		The figure also shows some of the sequences used in Acan et al.'s succinct representations.
	}
	\label{fig:circular-arc-unrolled}
\end{figure}

Gavoille and Paul~\cite{GavoillePaul2008} have shown that this construction 
preserves distances in the following sense:
\begin{lemma}[{\cite[Lem.\,6]{GavoillePaul2008}}]
\label{lem:circular-arc-dist}
	Let $G = ([n],E)$ be a circular-arc graph with arcs $[\ell_v,r_v]$ 
	where endpoints are distinct and in $[2n]$ and $\ell_1<\cdots<\ell_n$.
	Define $\tilde G = ([2n],\tilde E)$ as the interval graph with the following sets
	of intervals: 
	for every normal vertex $v$, include $[\ell_v,r_v]$ and $[\ell_v+2n,r_v+2n]$ 
	and for every reversed vertex $u$, 
	include $[r_u,\ell_u+2n]$ and $[r_u+2n,\ell_u+4n]$.
	Then for any $u<v$, we have 
	(identifying vertices with the ranks of their left endpoints)
	\[
		\GDistance_G(u,v) \wwrel= 
		\min\bigl\{ \GDistance_{\tilde G}(u,v),\; \GDistance_{\tilde G}(v,u+n) \bigr\}.
	\]
\end{lemma}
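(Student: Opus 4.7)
My plan is to prove the two inequalities of the stated equation separately, via natural projection and lifting maps between walks in the two graphs. Throughout, I identify vertex $w \in [n]$ of $G$ with its ``first copy'' in $\tilde G$ at index $w$, and let $w+n$ denote its ``second copy'' (the interval obtained by adding $2n$ to the endpoints of the first copy).

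\textbf{For the easy direction} ($\le$): I would define a projection $\pi\colon V(\tilde G) \to V(G)$ that collapses each pair of copies via $\pi(w)=\pi(w+n)=w$. A routine check of the four interval types in $\tilde G$ shows that every edge of $\tilde G$ either (a) projects to an edge of $G$, because the two corresponding arcs do intersect on the circle, or (b) collapses to a self-loop, namely when the two endpoints are the two copies of the same reversed vertex (whose intervals in $\tilde G$ overlap by construction, since $\ell_u>r_u$). Applying $\pi$ to a shortest path in $\tilde G$ between $u$ and $v$, or between $v$ and $u+n$, therefore yields a walk in $G$ between $u$ and $v$ of equal or shorter length.

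\textbf{For the harder direction} ($\ge$): I would lift a shortest walk $u=w_0,w_1,\ldots,w_d=v$ in $G$ into $\tilde G$ by tracking a witness point on the circle: start at some point in arc $w_0$ and at each step move the witness to a point in $w_i\cap w_{i+1}$. This defines an oriented trajectory on the circle with an integer winding number around a chosen cut. Each $w_i$ is then lifted to the copy in $\tilde G$ whose interval covers the current (unrolled) witness position, and the specific intervals for reversed vertices in $\tilde G$ are tuned precisely so that if $w_i$ and $w_{i+1}$ meet at the witness on the circle, the two lifted intervals share the unrolled witness point; hence the lifted sequence is a valid walk of length $d$ in $\tilde G$. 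Since a shortest walk cannot have absolute winding number $\ge 2$ (such a walk would traverse the circle twice, so it is no shorter than one with winding in $\{0,\pm 1\}$), the lift ends either at $v$ (winding $0$, giving $\GDistance_{\tilde G}(u,v)\le d$) or at $v+n$ (winding $+1$, giving $\GDistance_{\tilde G}(u,v+n) = \GDistance_{\tilde G}(v,u+n)\le d$ by the translation symmetry of $\tilde G$); the winding $-1$ case is handled symmetrically by starting the lift at $v$ instead of $u$.

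The \textbf{main obstacle} will be the bookkeeping in the lifting step: verifying that the witness-point construction always yields a valid lifted walk requires a small case analysis depending on whether $w_i$ and $w_{i+1}$ are normal or reversed, since the intervals in $\tilde G$ are defined differently in each case. The design of the reversed-vertex intervals $[r_u,\ell_u+2n]$ is precisely what makes this bookkeeping go through, and isolating this observation into a single combinatorial lemma of the form ``arcs intersect on the circle if and only if some pair of lifted copies intersects in $\tilde G$'' is the cleanest way to organize the proof.
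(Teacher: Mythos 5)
First, note that the paper does not prove this lemma at all: it is imported verbatim as \cite[Lem.\,6]{GavoillePaul2008}, so there is no internal proof to compare against. Your project-and-lift strategy (collapse the two copies for the upper bound, lift a shortest walk through a ``witness point'' with a winding number for the lower bound) is indeed the natural way to establish such a statement. However, two of your steps do not go through as written. The first is the ``routine check'' in the easy direction, and it interacts with the interval definition you are using. For a reversed vertex $u$ the printed interval $[r_u,\ell_u+2n]$ has length greater than $2n$, so it \emph{contains} every normal interval $[\ell_v,r_v]$ nested in the gap $(r_u,\ell_u)$ of $u$'s arc; such a pair is adjacent in $\tilde G$ but can be non-adjacent (even far apart) in $G$, so not every edge of $\tilde G$ projects to an edge of $G$ and the $\le$ direction fails. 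Your parenthetical remark that the two copies of a reversed vertex overlap ``since $\ell_u>r_u$'' shows you were checking exactly these printed intervals. The construction for which the lemma (and your projection argument) works is the genuine twice-unrolled lift, $[\ell_u,r_u+2n]$ and $[\ell_u+2n,r_u+4n]$: then the two copies of any vertex are disjoint, every edge of $\tilde G$ projects to an edge of $G$ with no degenerate cases, and the left-endpoint ranks of the first copies coincide with the vertex labels of $G$, which is needed for the identification $u\mapsto u$, $u\mapsto u+n$ in the formula to make sense at all.

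The second, independent gap is in the winding-$+1$ case of the lower bound. There is no ``translation symmetry of $\tilde G$'': the graph has only two periods, the shift by $2n$ is not an automorphism, and the claimed identity $\GDistance_{\tilde G}(u,v+n)=\GDistance_{\tilde G}(v,u+n)$ is false in general~-- for a long induced cycle realized as circular arcs with $u,v$ a couple of steps apart, the two quantities correspond to walks winding around the circle in \emph{opposite} directions and differ by about four. What your argument actually needs is that, for $\ell_u<\ell_v$, a $u$--$v$ walk of winding $+1$ (or of absolute winding $\ge 2$) is never shorter than the best walk of winding $0$ or $-1$. One way to get this is a sweeping argument: the union of the lifted intervals along such a walk is a connected subset of the line reaching from $u$'s first copy up to $v$'s second copy, hence it meets the first copy of $v$ (or the second copy of $u$), and the walk can be shortcut there. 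Without some argument of this kind, your lower bound only yields $\GDistance_G(u,v)\ge\min\{\GDistance_{\tilde G}(u,v),\GDistance_{\tilde G}(v,u+n),\GDistance_{\tilde G}(u,v+n)\}$, which is weaker than the stated two-term formula.
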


Both data structures of Acan et al.\
store the sequences $r'$ and $r''$ of the 
rank-reduced right endpoints
for normal resp.\ reversed vertices, in the order of their left endpoints.
Using rank/select on the bitvectors $S$ and $S'$~-- storing the ``type''
of endpoints (left vs.\ right for $S$; 
left normal, right normal, left reversed, right reversed for $S'$)~--
we can compute the endpoints $(l_v,r_v)\in[2n]^2$ of any vertex $v$
in the same complexity as reading entries of $r'$ and $r''$,
\ie, $\Oh(\log n / \log \log n)$ time for the grid version and $\Oh(1)$ time for the grid-free version.

Given access to $r$, the sequence of right endpoints of the circular arcs,
we can simulate access to a right endpoint $\tilde r_v$, $v\in[2n]$, 
in the twice-unrolled interval graph $\tilde G$ as follows:
If $v\le n$ and a normal vertex, $\tilde r_v = r_v$.
If $v\le n$ and a reversed vertex, $\tilde r_v = r_v+2n$.
Otherwise, $v\in[n+1,2n]$; then $\tilde r_v = \tilde r_{v-n} + 2n$.
(See \texttt R in \wref{fig:circular-arc-unrolled}.)
By storing the bitvector $U[1..6n]$ with rank support where $U[i] = 1$ iff 
$\tilde{\ell_v} = i$ or $\tilde r_v = i$ for some $v$,
we can compute the rank-reduced intervals $[\tilde \ell'_v,\tilde r'_v]$
for all vertices $v=1,\ldots,2n$ of $\tilde G$.
We also store the distance tree for $\tilde G$ using the data structure of 
\wref{thm:levelorder-trees} in $4n+o(n)$ bits,
as well as the auxiliary data structures of Acan et al.\ (without $r$) from 
\wref{lem:succinct-interval}, all of which occupy $\Oh(n)$ bits.
Together this shows the following result.

\begin{theorem}
\label{thm:circular-arc}
	A circular-arc graph on $n$ vertices can be represented in $n \lg n + o(n\lg n)$ 
	bits of space to support either
	\begin{thmenumerate}{thm:circular-arc}
	\item 
		\GAdjacent, \GDegree, and \GDistance in $\Oh(\log n/ \log \log n)$ time,\\
		$\GNeighbor(v)$ in $\Oh((\GDegree(v)+1)\cdot \log n/ \log \log n)$, and\\
		$\GSPath(u,v)$ in $\Oh((\GDistance(u,v)+1)\cdot \log n/ \log \log n)$ time; or
	\item 
		\GAdjacent and \GDistance in $\Oh(1)$ time,\\
		$\GNeighbor(v)$ and $\GDegree(v)$ in $\Oh(\GDegree(v)+1)$, and\\
		$\GSPath(u,v)$ in $\Oh(\GDistance(u,v)+1)$ time.
	\end{thmenumerate}
\end{theorem}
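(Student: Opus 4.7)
The plan is to augment either of Acan et al.'s two representations for circular-arc graphs with exactly the two ingredients already set up in the text preceding the theorem: the bitvector $U[1..6n]$ that rank-reduces the endpoints of the twice-unrolled interval graph $\tilde G$, and the distance tree of $\tilde G$ from \wref{sec:distance-oracles} stored via \wref{thm:levelorder-trees}. Writing $T$ for the base query time ($T=\log n/\log\log n$ for the grid version and $T=\Oh(1)$ for the grid-less version), the base representation gives $\Oh(T)$-time simulated access to any pair $(\tilde\ell_v,\tilde r_v)$ of $\tilde G$ and, via $U$, to its rank-reduced form $(\tilde\ell'_v,\tilde r'_v)\in[4n]^2$ in the same time.

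To answer $\GDistance_G(u,v)$ for $u<v$, I would invoke \wref{lem:circular-arc-dist} and return $\min\{\GDistance_{\tilde G}(u,v),\,\GDistance_{\tilde G}(v,u+n)\}$, computing each summand by the interval-graph distance algorithm of \wref{sec:distance-oracles} on $\tilde G$. A single run performs one $\TrLevAnc$ query in the stored distance tree (constant time by \wref{thm:levelorder-trees}), one $\TrDepth$ subtraction, and an $\Oh(1)$-hop $\GSPath$ check that needs only a constant number of $\GAdjacent_{\tilde G}$ calls; each such adjacency check costs $\Oh(T)$ via the simulated endpoint access. Hence $\GDistance_G$ runs in $\Oh(T)$. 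The $\GSPath_G$ query is handled analogously by materialising the shorter of the two $\tilde G$-paths and mapping each $w\in[2n]$ back to $((w-1)\bmod n)+1$ in $G$, for total time $\Oh((\GDistance_G(u,v)+1)T)$. The remaining operations $\GAdjacent$, $\GDegree$, and $\GNeighbor$ inherit their running time directly from the chosen Acan et al.\ representation, since our additions do not intercept them.

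The space analysis is clean: the base representation contributes $n\lg n + o(n\lg n)$ bits, the distance tree on the $2n$ nodes of $\tilde G$ takes $4n+o(n)$ bits by \wref{thm:levelorder-trees}, and $U$ occupies $\Oh(n)$ bits by \wref{lem:compressed-bit-vectors}. Everything fits in $n\lg n + o(n\lg n)$ bits as claimed, and the two combinations of time bounds in items \emph{(a)} and \emph{(b)} match the two choices of base structure.

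The main obstacle I expect is arguing that the distance-tree construction from \wref{sec:distance-oracles}, which nominally wants an explicit interval realisation, still goes through when the intervals of $\tilde G$ are only available through an $\Oh(T)$-time simulation. This is fine at preprocessing, where we may spend $\Oh(nT)$ time to materialise all $2n$ intervals of $\tilde G$ once before building the tree; and it is fine at query time because the only $\tilde G$-specific subroutine invoked is $\GAdjacent_{\tilde G}$, whose $\Oh(T)$ cost is already charged. A secondary point is the disconnected case the text flags in passing: if $G$ is disconnected, it is in fact an interval graph and \wref{thm:succinct-interval-graphs} applies unchanged, so one bit of state suffices to dispatch between the two constructions.
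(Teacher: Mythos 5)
Your proposal matches the paper's own argument essentially step for step: augment either Acan et al.\ representation with the rank-reduction bitvector $U$ and the distance tree of the twice-unrolled interval graph $\tilde G$, answer \GDistance{} via the Gavoille--Paul lemma as the minimum of two interval-graph distances each costing one \TrLevAnc{} plus $\Oh(1)$ simulated adjacency checks, and account for the $\Oh(n)$-bit overhead. The handling of the disconnected case (fall back to the interval-graph structure) is also exactly what the paper does, so there is nothing to add.
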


\section{Conclusion}

We present succinct data structures and distance oracles for
interval graphs and several related families of graphs.
All are based on the solution of a fundamental data-structuring problem on trees: 
translating between breadth-first ranks and depth-first ranks of nodes in an ordinal tree.
Apart from demonstrating the unmatched versatility of tree covering~-- the only
method for space-efficient representations of trees known to support this BFS-DFS mapping~-- 
level-order operations are likely to find further applications 
in space-efficient data structures.

Regarding open questions, we note that one operation that is supported 
by standard tree covering has unwaveringly resisted all our attempts
to be realized on top of tree slabbing: generating $\lg n$ consecutive bits
of the \BP or \DFUDS of the tree.
Such operations are highly desirable as they allow immediate reuse
of any auxiliary data structures to support operations
on the basis of \BP resp.\ \DFUDS.
These sequences are inherently depth-first, though, and seem incompatible
with slicing the tree horizontally: the sought $\lg n$ bits might span a large number 
of (tier-2) slabs.
How and if level-order rank/select and generating a word of \BP or \DFUDS
can be simultaneously supported to run in constant time remains an open question.

\ifarxiv{
	\clearpage
	\appendix
	
	\section{Survey of Succinct Tree Representations}
\label{app:table}
A more complete survey of the previous representations of ordinal trees is given here, along with a table comparing the different techniques.

The \textsl{level-order unary degree sequence} ($\LOUDS$) representation of an ordinal tree~\cite{j1989}
consists of listing the degrees of nodes in unary encoding while traversing the tree
with a breadth-first search.
This is a direct generalization of the representation of heaps, \ie, 
complete binary trees stored in an array in breadth-first order:
There, due to the completeness of the tree, no extra information is needed to map
the rank of a node in the breadth-first traversal to the ranks of its 
parent and children in the tree.
The $\LOUDS$ is exactly the required information to do the same for general ordinal trees.
Historically one of the first schemes to succinctly represent a static tree,
$\LOUDS$ is still liked for its simplicity and practical efficiency~\cite{ArrovueloClanovasNavarro2010},
but a major disadvantage of $\LOUDS$-based data structures is that they support only 
a very limited set of operations~\cite{Navarro2016}.

Replacing the breadth-first traversal by a depth-first traversal yields the 
\textsl{depth-first unary degree sequence} (\DFUDS) encoding of a tree,
based on which succinct data structures with efficient support for many more operation have 
been designed~\cite{bdmrrr2005}.
Other approaches that allow to support largely the same set of operations are based on the 
\textsl{balanced-parentheses} (\BP) encoding~\cite{MunroRaman2001} or rely on \textsl{tree covering} (\TC)~\cite{GearyRamanRaman2006}
for a hierarchical tree decomposition.
%

As the oldest tree representation after \LOUDS, the \BP-based representations have a long history and the support for many operations was added for different applications. Munro and Raman~\cite{MunroRaman2001} first designed a $\BP$-based representation supporting $\TrParent$, $\TrNbDesc$, $\TrRank_{\pre/\post}$ and $\TrSelect_{\pre/\post}$ in $O(1)$ time and $\TrChild(x, i)$ in $O(i)$ time.
This is augmented by Munro~et al.~\cite{mrr2001} to support operations related to leaves in constant time, including $\TrLeafRank$, $\TrLeafSel$, $\TrLeftLeaf$ and $\TrRightLeaf$, which are used to represent suffix trees succinctly.
Later, 
Chiang~et al.~\cite{cll2005} showed how to support $\TrDeg$ using the $\BP$ representation in constant time which is needed for succinct graph representations,
while Munro and Rao~\cite{mr2004} designed $O(1)$-time support for $\TrLevAnc$, $\TrLevelPred$ and $\TrLevelSucc$ to represent functions succinctly.
Constant-time support for $\TrChild$, $\TrChildRank$, $\TrHeight$ and $\LCA$ is then provided by Lu and Yeh~\cite{ly2006}, that for $\TrRank_{\inorder}$ and $\TrSelect_{\inorder}$ by Sadakane~\cite{s2007} in their work of encoding suffix trees, and that for $\TrLevelLeft$ and $\TrLevelRight$ by Navarro and Sadakane~\cite{NavarroSadakane2014}.
        
Benoit~et al.~\cite{bdmrrr2005} were the first to represented a tree succinctly using $\DFUDS$, and their structure supports $\TrChild$, $\TrParent$, $\TrDeg$ and $\TrNbDesc$ in constant time. 
This representation is augmented by Jansson~et al.~\cite{JanssonSadakaneSung2012} to provide constant-time support for $\TrChildRank$, $\TrDepth$, $\TrLevAnc$, $\LCA$, $\TrLeafRank$, $\TrLeafSel$, $\TrLeftLeaf$ and $\TrRightLeaf$, $\TrRank_{\pre}$ and $\TrSelect_{\pre}$. 
To design succinct representations of labeled trees, Barbay~et al.~\cite{bhmr2011} further gave $O(1)$-time support for $\TrRank_{\DFUDS}$ and $\TrSelect_{\DFUDS}$. 

$\TC$ was first used by Geary~et al.~\cite{GearyRamanRaman2006} to represent a tree succinctly to support $\TrChild$, $\TrChildRank$, $\TrDepth$, $\TrLevAnc$, $\TrNbDesc$, $\TrDeg$, $\TrRank_{\pre/\post}$ and $\TrSelect_{\pre/\post}$ in constant time.
He~et al.~\cite{HeMunroRao2012} further showed how to use $\TC$ to support all other operations provided by $\BP$ and $\DFUDS$ representations in constant time, except $\TrRank_{\inorder}$ and $\TrSelect_{\inorder}$ which appeared after the conference version of their work.
Later, based on a different tree covering algorithm, Farzan and Munro~\cite{frr2009} destined a succinct representation that not only supports all these operations but also can compute an arbitrary word in a $\BP$ or $\DFUDS$ sequence in $O(1)$ time.
The latter implies that their approach can support all the operations supported by $\BP$ or $\DFUDS$ representations. 

	
	
	
	
	
	
	
	
	
	
	
	
%
	
	

\begin{table}[h]
\begin{center}
\begin{tabular} {|l|c|c|c|c|} \hline
operations                             &\BP 
&\DFUDS     &{\it previous} \TC
&our work  \\ \hline
$\TrChild$, $\TrChildRank$               
                         &\checkmark    &\checkmark  &\checkmark &\checkmark \\ \hline
$\TrDepth$, $\TrLevAnc$, $\LCA$  
                         &\checkmark    &\checkmark  &\checkmark &\checkmark \\ \hline
$\TrNbDesc$, $\TrDeg$
                         &\checkmark    &\checkmark  &\checkmark &\checkmark \\ \hline
$\TrHeight$              &\checkmark    &            &\checkmark &\checkmark \\ \hline
$\TrLeftLeaf$, $\TrRightLeaf$
                         &\checkmark    &\checkmark  &\checkmark &\checkmark \\ \hline
$\TrLeafRank$, $\TrLeafSel$
                         &\checkmark    &\checkmark  &\checkmark &\checkmark \\ \hline
$\TrLevelLeft$, $\TrLevelRight$
                         &\checkmark    &            &\checkmark &\checkmark \\ \hline
$\TrLevelPred$, $\TrLevelSucc$
                         &\checkmark    &            &\checkmark &\checkmark \\ \hline
$\TrRank_{\pre}$, $\TrSelect_{\pre}$   
                         &\checkmark    &\checkmark  &\checkmark &\checkmark \\ \hline
$\TrRank_{\post/\inorder}$, $\TrSelect_{\post/\inorder}$   
                         &\checkmark    &            &\checkmark &\checkmark \\ \hline
$\TrRank_{\DFUDS}$, $\TrSelect_{\DFUDS}$   
                         &              &\checkmark  &\checkmark &\checkmark \\ \hline
$\TrRank_{\lvl}$, $\TrSelect_{\lvl}$                         
                         &              &            &           &\checkmark \\ \hline
$\TrPrevInt$, $\TrNextInt$                         
                         &              &            &           &\checkmark \\ \hline
\end{tabular}
\end{center}
\caption{Operations supported in constant time by different succinct tree representations. }
\label{tab:suctree}
\end{table}

\section{Tree Operations}
\label{app:more-operations}
In this appendix, we sketch how to support the remaining operations from \wref{tab:operations}.
Many techniques are similar to previous work on \TC data structures~\cite{FarzanMunro2014,HeMunroRao2012,GearyRamanRaman2006},
but most operations require some changes to work on top of tree slabbing.
Operations required for our distance oracles are presented in full details
to be self-contained; for the others and where appropriate, 
we only describe the changes necessary to the algorithms
given in~\cite{FarzanMunro2014}.

\subparagraph{\TrParent:}
$\TrParent(v) = \TrLevAnc(v,1)$, so it is subsumed by the level-ancestor solution below.

\subparagraph{\TrLastChild:} 
Obviously, this can be obtained as $\TrLastChild(v) = \TrChild(v,\TrDeg(v))$ 
using the operations below, but it 
can also easily be implemented directly as follows.

	Given $\tau(v)$, find $\tau(u)$, for $u$ the rightmost child of $v$.
	Suppose first that $\tau_3(v) \ne 1$. Then all children of $v$ 
	are inside $\mu^{\tau_1(v)}_{\tau_2(v)}$.
	We use the micro-tree lookup table to obtain $\tau_3(u)$,
	and whether $u$ is a promoted node.
	If not, we return $\langle\tau_1(v),\tau_2(v),\tau_3(u)\rangle$.
	For promoted nodes, we store their canonical $\tau$-name.
	We store the canonical $\langle\tau_2(u),\tau_3(u)\rangle$
	if $u$ is in $\mu^{\tau_1(v)}$, and the full $\tau(u)$ otherwise,
	plus 1 extra bit to distinguish these cases.
	(This amounts to $o(n)$ extra bits as there are 
	$\Oh(n/H')$ tier-2 promoted nodes
	and $\Oh(n/H)$ tier-1 promoted nodes.)
	
	If $\tau_1(v)=1\ne\tau_2(v)$, we store $\langle\tau_2(u),\tau_3(u)\rangle$
	of $v$' rightmost child, which must lie in $\mu^{\tau_1(v)}$.
	If $\tau_1(v)=1=\tau_2(v)$, we simply store $\tau(u)$ directly.

\subparagraph{\TrDepth:}
Given $\tau(v)$, compute the level on which $v$ lies.
We store the global depth of the mini-tree root and the mini-tree-local
depth at each micro-tree root. For a node $v$, find the depth relative to the micro-tree
root using the lookup table, and add the mini-tree-local depth and the global depth. We
may need to adjust for dummy roots but that is trivial.

\subparagraph{\TrLevAnc:}
Given $\tau(v)$, find $\TrLevAnc(v,i)= \tau(w)$ for $w$ the ancestor of $v$
on level $\TrDepth(v)-i$.
The solution of~\cite[\S3]{GearyRamanRaman2006} essentially works without changes, 
but tree slabbing actually simplifies it slightly.
We start by bootstrapping from a non-succinct solution 
for the level-ancestor (LA) problem:
\begin{lemma}[Level ancestors, {\cite[Thm.\,13]{BenderFarachColton2004}}]
	There is a data structure using $\Oh(n \log n)$ bits of space 
	that answers $\TrLevAnc(v,i)$ queries on a tree of $n$ nodes 
	in $\Oh(1)$ time.
\end{lemma}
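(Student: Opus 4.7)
The plan is to prove the lemma via the classical construction of Bender and Farach-Colton, which combines a \emph{ladder decomposition} for short ancestor hops with \emph{sparse power-of-two jump pointers} for long ones, keeping the total space at $\Oh(n)$ machine words (\ie, $\Oh(n \lg n)$ bits).

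First, I would decompose $T$ into long paths: at each internal node, designate as its \emph{long child} a child whose subtree has maximum height, and take the maximal chains of long-child edges. For each long path $P$ of length $\ell$, attach a \emph{ladder} consisting of $P$ itself together with its $\ell$ further tree-ancestors (up to the root). Ladders have total length $\le 2n$; stored as arrays indexed by depth they cost $\Oh(n)$ words. The long-child rule ensures that the ladder attached to $v$'s long path extends at least $\TrHeight(v)$ levels above $v$, so every ancestor of $v$ within distance $\TrHeight(v)$ can be retrieved in $\Oh(1)$ by array indexing.

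Second, mark as \emph{jump nodes} those whose subtree has size at least $\lg n$; there are $\Oh(n/\lg n)$ of them. For each jump node, precompute pointers to its ancestors at distances $1,2,4,\ldots$ up to the root ($\Oh(\lg n)$ pointers per jump node, $\Oh(n)$ pointers overall). Additionally, each node stores a pointer to its nearest jump-node ancestor. For any non-jump node $v$, the subtree rooted at $v$ has fewer than $\lg n$ nodes and so has one of polynomially many shapes encodable in $\Oh(\lg n)$ bits; a single universal lookup table of size $o(n)$ can then resolve any local ancestor or height query on such small subtrees.

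To answer $\TrLevAnc(v,i)$: if $i \le \TrHeight(v)$, read the answer directly from the ladder of $v$'s long path. Otherwise, chase the pointer to $v$'s nearest jump-node ancestor $u$, apply the power-of-two jump from $u$ that lands within a factor of two of the target depth, and finish inside the ladder at the landing point; that ladder is long enough because its long path has length at least the jump just performed. The main obstacle is the careful argument that the ladder at the landing point indeed spans the residual distance after the power-of-two jump (which in turn relies on the doubling-height property of long paths), and that the precomputed information at non-jump nodes suffices to resolve queries whose target falls between $v$ and $u$; once these are settled, the space bound of $\Oh(n)$ words and the $\Oh(1)$ query time are immediate from the construction.
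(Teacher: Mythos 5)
The paper does not actually prove this lemma; it imports it as a black box from Bender and Farach-Colton, and your sketch is of exactly their macro--micro/ladder/jump-pointer construction, so the overall route is the right one. The ladder half of your argument is sound: long-path decomposition by maximum-height children, doubling each path upward, total ladder length at most $2n$, and the observation that a node of height $h$ has its first $h$ proper ancestors on its own ladder.

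The gap is in the jump-pointer half, and it breaks the claimed space bound. You define a jump node as any node whose subtree has size at least $\lg n$ and assert there are $\Oh(n/\lg n)$ of them; that count is false for this definition~-- in a path on $n$ nodes all but the bottom $\lg n$ nodes qualify, so there can be $\Theta(n)$ jump nodes~-- and with $\Theta(n)$ jump nodes each carrying $\Theta(\lg n)$ pointers of $\Theta(\lg n)$ bits you use $\Theta(n\lg^2 n)$ bits, exceeding the $\Oh(n\lg n)$-bit (\ie, $\Oh(n)$-word) budget. The correct construction places jump pointers only at the \emph{minimal} such nodes (the ``macro leaves'': nodes with at least $\lg n$ descendants none of whose children has that many), of which there genuinely are $\Oh(n/\lg n)$. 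But then an internal macro node has no jump-node \emph{ancestor}, so your routing step ``chase the pointer to $v$'s nearest jump-node ancestor'' must be replaced by a pointer to a jump-node \emph{descendant} $u$, answering $\TrLevAnc(v,i)$ as $\TrLevAnc(u,\, i + \TrDepth(u)-\TrDepth(v))$; this also dissolves the ``targets between $v$ and $u$'' obstacle you flag as the main difficulty, since $u$ then lies below $v$ and every query target is weakly above $u$. A second, smaller error: trees on up to $\lg n$ nodes come in $\Theta(4^{\lg n})=\Theta(n^2)$ shapes, so a universal lookup table at that threshold has size $\Omega(n^2)$, not $o(n)$; the micro-tree cutoff must be $c\lg n$ for $c<1/2$ (the present paper uses $\tfrac14\lg n$ for its own micro trees for precisely this reason).
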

Geary et al.\ apply this to a so-called macro tree; we observe that we can instead
build the LA data structure for all tier-1 s-nodes, 
where s-nodes $u$ and $v$ are connected by a macro edge 
if there is a path from $u$ to $v$ in $T$ that does not contain further s-nodes.
This uses $\Oh(\frac nH \log(\frac nH)) = \Oh(n/\log n)$ bits.
Each mini-tree root stores its closest ancestor that is a tier-1 s-node.
Additionally, mini/micro tree roots and (tier-1/tier-2) s-nodes 
store collections of \emph{jump pointers}:
mini trees / tier-1 s-nodes allow to jump to an ancestor at any distance in 
$1,2,\ldots,\sqrt{H}$ or $\splitaftercomma{\sqrt{H},2\sqrt{H},3\sqrt{H},\ldots,H}$;
the same holds for micro trees / tier-2 s-nodes with $H'$ instead of $H$, and
as usual storing only $\langle \tau_2,\tau_3\rangle$.
(Mini-tree roots / tier-1 s-nodes store full $\tau$-names in jump pointers.)

The query now works as follows (essentially \cite[Fig.\,6]{GearyRamanRaman2006}, 
but with care for s-nodes):
We compute the micro-tree local depth of $v$ by table lookup 
and check if $w$ lies inside the micro tree; if so, we find it by table lookup.
If not, we move to the micro-tree root~-- 
or the tier-2 s-node in case the micro-tree root is a dummy root 
(using a micro-tree local \TrLevAnc query); let's call this node $x$.
We now compute $x$'s mini-tree local depth
(using the data structures for \TrDepth)
to check if $w$ lies inside this mini-tree.
If it does,
we use $x$'s jump pointers: either directly to $w$
(if the distance was at most $\sqrt{H'}$), or to get within distance $\sqrt{H'}$, from where we continue recursively.
If $w$ is not within the current mini-tree, we jump to $y$, 
the mini-tree root, or a tier-1 s-node in case the mini tree has a dummy root
(using a recursive, mini-tree local \TrLevAnc query).
If $w$ is within distance $H$ from there,
we use $y$'s jump pointers 
(to either get to $y$ directly, or to get within distance $\sqrt{H}$).
Otherwise, we use $y$'s pointer to its next tier-1 s-node ancestor 
(unless $y$ already is such).
The LA data structure on tier-1 s-nodes allows us to jump within distance $H$ of~$w$,
from where we continue.

Note that after following two root jump pointers of each kind 
we are always close enough to $w$
that the next micro-tree root will have a direct jump pointer to $w$.
The recursive call to find a tier-1 s-node subforest root 
(when a mini-tree has a dummy root) is always resolved local to the mini tree,
so cannot lead to another such recursive calls.
Hence the running time is~$\Oh(1)$.

\separatedpar
The remaining tree operations are not immediately needed for the computation of 
distances in interval graphs.
We sketch how to support the operations by describing the changes needed to make to the approach used in previous work of \TC.

\subparagraph{\textbf{\TrChild, \TrChildRank}:} 
For \TrChild, no changes are necessary, as we will never be getting a child of a dummy root. As for \TrChildRank, the only difference occurs when we need to find the rank of an $s-node$. Its rank in the mini(micro)-tree is wrong because of the dummy root. For the tier-1 $s$-nodes, we store a bit-vector storing a 1 whenever the preceding $s$-node has a different parent. The \TrChildRank would be distance to the preceding 1 in the bit-vector. The length of the bit-vector is the number of tier-1 $s$-nodes which is $O(n/H)$. Similarly for tier-2 $s$-nodes.

\subparagraph{\textbf{\TrDeg, \TrNbDesc}:} 
No changes are necessary for \TrDeg or \TrNbDesc.

\subparagraph{\textbf{\TrHeight}:} 
For a mini-tree root, we may explicitly store the height. For each tier-1 $s$-node, we may also explicitly store the height. Now we describe how to find the height of a micro-tree root. For a micro-tree root, we store the micro-tree that contains the deepest descendant. If this micro-tree has a tier-1 promoted $s$-node, we store the promoted $s$-node with the greatest height. The height of the micro-tree root can be found by the difference in depths of the two micro-tree roots, plus the height of the tier-1 $s$-node. For a node that is not a micro-tree root, we consider the micro-tree $\mu^i_j$ that it is in. Suppose that $\mu^i_j$ does not contain any tier-2 promoted $s$-nodes. Then we proceed in the same way as in \cite{FarzanMunro2014}. Otherwise, using the lookup table, we find the range of tier-2 promoted $s$-nodes that are descendants, and using a range-maximum query, find the tier-2 promoted $s$-node that has the greatest depth. To find the depth of a tier-2 $s$-node, we store the micro-tree containing the deepest descendant as in the root case. We then proceed in the same manner. The space required for range-maximum queries on all tier-2 $s$-nodes is linear in the number which is $O(n/H')$.

\subparagraph{\textbf{\TrLeftLeaf, \TrRightLeaf}:} 
This is done in the same way as previously. The only difference is that we need to store the left most/right most leaf at every tier-1 $s$-node. We also need to store the micro-tree that contains the left most/right most leaf, or the micro-tree containing the relevant tier-1 $s$-node at every tier-2 $s$-node.

\subparagraph{\textbf{\TrLeafSize}:} 
At each tier-1 $s$-node we store the number of leaves in the subtree rooted at the $s$-node. We also store the prefix sum of these values (the sum of the number of leaves from the first $s$-node to the current $s$-node). For tier-2 $s$-nodes, we store the number of leaves in the subtree of the mini-tree rooted at the $s$-node. We do not include tier-1 $s$-nodes (which are leaves of the mini-tree) in this count. For the $s$-nodes of each mini-tree, we store the prefix-sum of the number of leaves (starting from the first tier-2 $s$-node of the mini-tree to the current $s$-node).

To find the number of leaves below a node, we find the number of leaves in the micro-tree using the lookup table. We find the range of the tier-2 $s$-nodes below it, if the micro has any tier-2 promoted $s$-nodes. If not we check the unique outgoing edge if necessary for tier-2 promoted $s$-nodes. From the range of the promoted $s$-nodes, we sum of the leaves in the mini-tree from the prefix sum data structure. We also find the tier-1 $s$ nodes below in similar fashion. We then take the sum of the sizes of the tier-1 $s$-nodes using the prefix-sum data structure.
\textbf{\TrLeafRank and \TrLeafSel}: \TrLeafSel is done in the same way as before, using the compressed bit vector approach. For \TrLeafRank, in addition to the information stored, we also need to store the number of leaves preceding tier-1 $s$-nodes. For tier-2 $s$-nodes, we store the preceding tier-1 $s$-node, and the number of leaves between them.

\subparagraph{\textbf{\TrLevelLeft, \TrLevelRight}:} 
No changes needed \wrt previous work.

\subparagraph{\textbf{\TrLevelSucc, \TrLevelPred}:} 
Using $\TrRank_{\lvl}$ and $\TrSelect_{\lvl}$, these operations are now straight-forward
and do not need a tailored implementation.

\subparagraph{\textbf{\LCA}:} 
The technique of He et al.~\cite{HeMunroRao2012} works for tree slabbing, too. The only change we need to make is to include tier-1 $s$-nodes in the tier-1 macro tree and tier-2 $s$-nodes in each tier-2 macro tree. These will be included instead of the dummy root added.

\subparagraph{\textbf{$\TrRank_{\pre/\post/\inorder/\DFUDS/\lvl}$, $\TrSelect_{\pre/\post/\inorder\DFUDS/\lvl}$}:} 
For other traversals can be handled similarly to preorder / level order.

}{}

\bibliography{references}

\begin{thebibliography}{10}

\bibitem{AcanChakrabortyJoRao19}
H{\"{u}}seyin Acan, Sankardeep Chakraborty, Seungbum Jo, and Srinivasa~Rao
  Satti.
\newblock Succinct data structures for families of interval graphs.
\newblock In {\em Algorithms and Data Structures - 16th International
  Symposium, {WADS} 2019, Edmonton, AB, Canada, August 5-7, 2019, Proceedings},
  pages 1--13, 2019.
\newblock \href {https://doi.org/10.1007/978-3-030-24766-9\_1}
  {\path{doi:10.1007/978-3-030-24766-9\_1}}.

\bibitem{AcanChakrabortyJoRao19arxiv}
H{\"{u}}seyin Acan, Sankardeep Chakraborty, Seungbum Jo, and Srinivasa~Rao
  Satti.
\newblock Succinct data structures for families of interval graphs, 2019.
\newblock \href {http://arxiv.org/abs/1902.09228} {\path{arXiv:1902.09228}}.

\bibitem{ArrovueloClanovasNavarro2010}
Diego Arroyuelo, Rodrigo C\'{a}novas, Gonzalo Navarro, and Kunihiko Sadakane.
\newblock Succinct trees in practice.
\newblock In {\em Meeting on Algorithm Engineering \& Expermiments (ALENEX)},
  ALENEX '10, pages 84--97. SIAM, 2010.

\bibitem{byfns2000}
Amotz Bar{-}Noy, Reuven Bar{-}Yehuda, Ari Freund, Joseph Naor, and Baruch
  Schieber.
\newblock A unified approach to approximating resource allocation and
  scheduling.
\newblock In F.~Frances Yao and Eugene~M. Luks, editors, {\em Proceedings of
  the Thirty-Second Annual {ACM} Symposium on Theory of Computing, May 21-23,
  2000, Portland, OR, {USA}}, pages 735--744. {ACM}, 2000.
\newblock \href {https://doi.org/10.1145/335305.335410}
  {\path{doi:10.1145/335305.335410}}.

\bibitem{bhmr2011}
J{\'e}r{\'e}my Barbay, Meng He, J.~Ian Munro, and Srinivasa~Rao Satti.
\newblock Succinct indexes for strings, binary relations and multilabeled
  trees.
\newblock {\em ACM Transactions on Algorithms}, 7:52:1--52:27, September 2011.
\newblock \href {https://doi.org/http://doi.acm.org/10.1145/2000807.2000820}
  {\path{doi:http://doi.acm.org/10.1145/2000807.2000820}}.

\bibitem{BenderFarachColton2004}
Michael~A. Bender and Mart{\'\i}n Farach-Colton.
\newblock The level ancestor problem simplified.
\newblock {\em Theoretical Computer Science}, 321(1):5--12, June 2004.
\newblock \href {https://doi.org/10.1016/j.tcs.2003.05.002}
  {\path{doi:10.1016/j.tcs.2003.05.002}}.

\bibitem{bdmrrr2005}
David Benoit, Erik~D. Demaine, J.~Ian Munro, Rajeev Raman, Venkatesh Raman, and
  S.~Srinivasa Rao.
\newblock Representing trees of higher degree.
\newblock {\em Algorithmica}, 43(4):275--292, 2005.
\newblock \href {https://doi.org/10.1007/s00453-004-1146-6}
  {\path{doi:10.1007/s00453-004-1146-6}}.

\bibitem{bbk2003}
Daniel~K. Blandford, Guy~E. Blelloch, and Ian~A. Kash.
\newblock Compact representations of separable graphs.
\newblock In {\em Proceedings of the 14th Annual ACM-SIAM Symposium on Discrete
  Algorithms}, pages 679--688, 2003.

\bibitem{cds2008}
Luca {Castelli Aleardi}, Olivier Devillers, and Gilles Schaeffer.
\newblock Succinct representations of planar maps.
\newblock {\em Theoretical Computer Science}, 408(2-3):174--187, 2008.

\bibitem{cll2005}
Yi-Ting Chiang, Ching-Chi Lin, and Hsueh-I Lu.
\newblock Orderly spanning trees with applications.
\newblock {\em SIAM Journal on Computing}, 34(4):924--945, 2005.
\newblock \href {https://doi.org/10.1137/S0097539702411381}
  {\path{doi:10.1137/S0097539702411381}}.

\bibitem{cghkl1998}
Richie Chih-Nan Chuang, Ashim Garg, Xin He, Ming-Yang Kao, and Hsueh-I Lu.
\newblock Compact encodings of planar graphs via canonical orderings and
  multiple parentheses.
\newblock In {\em Proceedings of the 25th International Colloquium on Automata,
  Languages and Programming}, pages 118--129, 1998.

\bibitem{cm1996}
D.~R. Clark and J.~I. Munro.
\newblock Efficient suffix trees on secondary storage.
\newblock In {\em Proceedings of the 7th Annual ACM-SIAM Symposium on Discrete
  Algorithms}, pages 383--391, 1996.
\newblock \href {https://doi.org/10.5555/313852.314087}
  {\path{doi:10.5555/313852.314087}}.

\bibitem{DavoodiRamanRao2017}
Pooya Davoodi, Rajeev Raman, and Srinivasa~Rao Satti.
\newblock On succinct representations of binary trees.
\newblock {\em Mathematics in Computer Science}, 11(2):177--189, March 2017.
\newblock \href {https://doi.org/10.1007/s11786-017-0294-4}
  {\path{doi:10.1007/s11786-017-0294-4}}.

\bibitem{fm2008}
Arash Farzan and J.~Ian Munro.
\newblock Succinct representations of arbitrary graphs.
\newblock In {\em 16th Annual European Symposium on Algorithms}, pages
  393--404, 2008.

\bibitem{FarzanMunro2014}
Arash Farzan and J.~Ian Munro.
\newblock A uniform paradigm to succinctly encode various families of trees.
\newblock {\em Algorithmica}, 68(1):16--40, June 2014.
\newblock \href {https://doi.org/10.1007/s00453-012-9664-0}
  {\path{doi:10.1007/s00453-012-9664-0}}.

\bibitem{frr2009}
Arash Farzan, Rajeev Raman, and S.~Srinivasa Rao.
\newblock Universal succinct representations of trees?
\newblock In {\em Proceedings of the 35th International Colloquium on Automata,
  Languages and Programming}, volume 5555 of {\em Lecture Notes in Computer
  Science}, pages 451--462, 2009.

\bibitem{GavoillePaul2008}
Cyril Gavoille and Christophe Paul.
\newblock Optimal distance labeling for interval graphs and related graph
  families.
\newblock {\em {SIAM} Journal on Discrete Mathematics}, 22(3):1239--1258,
  January 2008.
\newblock \href {https://doi.org/10.1137/050635006}
  {\path{doi:10.1137/050635006}}.

\bibitem{GearyRamanRaman2006}
Richard~F. Geary, Rajeev Raman, and Venkatesh Raman.
\newblock Succinct ordinal trees with level-ancestor queries.
\newblock {\em {ACM} Transactions on Algorithms}, 2(4):510--534, October 2006.
\newblock \href {https://doi.org/10.1145/1198513.1198516}
  {\path{doi:10.1145/1198513.1198516}}.

\bibitem{Hanlon1982}
Phil Hanlon.
\newblock Counting interval graphs.
\newblock {\em Transactions of the American Mathematical Society},
  272(2):383--383, February 1982.
\newblock \href {https://doi.org/10.1090/s0002-9947-1982-0662044-8}
  {\path{doi:10.1090/s0002-9947-1982-0662044-8}}.

\bibitem{HeMunroRao2012}
Meng He, J.~Ian Munro, and Srinivasa~Satti Rao.
\newblock Succinct ordinal trees based on tree covering.
\newblock {\em {ACM} Transactions on Algorithms}, 8(4):1--32, September 2012.
\newblock \href {https://doi.org/10.1145/2344422.2344432}
  {\path{doi:10.1145/2344422.2344432}}.

\bibitem{j1989}
Guy Jacobson.
\newblock Space-efficient static trees and graphs.
\newblock In {\em Proceedings of the 30th Annual IEEE Symposium on Foundations
  of Computer Science}, pages 549--554, 1989.
\newblock \href {https://doi.org/10.1109/SFCS.1989.63533}
  {\path{doi:10.1109/SFCS.1989.63533}}.

\bibitem{JanssonSadakaneSung2012}
Jesper Jansson, Kunihiko Sadakane, and Wing-Kin Sung.
\newblock Ultra-succinct representation of ordered trees with applications.
\newblock {\em Journal of Computer and System Sciences}, 78(2):619--631, March
  2012.
\newblock \href {https://doi.org/10.1016/j.jcss.2011.09.002}
  {\path{doi:10.1016/j.jcss.2011.09.002}}.

\bibitem{KlavikOtachiSejnoha2019}
Pavel Klav{\'{\i}}k, Yota Otachi, and Ji{\v{r}}{\'{\i}} {\v{S}}ejnoha.
\newblock On the classes of interval graphs of limited nesting and count of
  lengths.
\newblock {\em Algorithmica}, 81(4):1490--1511, April 2019.
\newblock \href {https://doi.org/10.1007/s00453-018-0481-y}
  {\path{doi:10.1007/s00453-018-0481-y}}.

\bibitem{ly2006}
Hsueh-I Lu and Chia-Chi Yeh.
\newblock Balanced parentheses strike back.
\newblock {\em {ACM} Transactions on Algorithms}, 4:28:1--28:13, July 2008.
\newblock \href {https://doi.org/http://doi.acm.org/10.1145/1367064.1367068}
  {\path{doi:http://doi.acm.org/10.1145/1367064.1367068}}.

\bibitem{mn2016}
J.~Ian Munro and Patrick~K. Nicholson.
\newblock Succinct posets.
\newblock {\em Algorithmica}, 76(2):445--473, 2016.
\newblock \href {https://doi.org/10.1007/s00453-015-0047-1}
  {\path{doi:10.1007/s00453-015-0047-1}}.

\bibitem{MunroRaman2001}
J.~Ian Munro and Venkatesh Raman.
\newblock Succinct representation of balanced parentheses and static trees.
\newblock {\em {SIAM} Journal on Computing}, 31(3):762--776, January 2001.
\newblock \href {https://doi.org/10.1137/s0097539799364092}
  {\path{doi:10.1137/s0097539799364092}}.

\bibitem{mrr2001}
J.~Ian Munro, Venkatesh Raman, and S.~Srinivasa Rao.
\newblock Space efficient suffix trees.
\newblock {\em Journal of Algorithms}, 39(2):205--222, 2001.
\newblock \href {https://doi.org/10.1006/jagm.2000.1151}
  {\path{doi:10.1006/jagm.2000.1151}}.

\bibitem{mr2004}
J.~Ian Munro and S.~Srinivasa Rao.
\newblock Succinct representations of functions.
\newblock In {\em Proceedings of the 31st International Colloquium on Automata,
  Languages and Programming}, volume 3142 of {\em Lecture Notes in Computer
  Science}, pages 1006--1015, 2004.
\newblock \href {https://doi.org/10.1007/978-3-540-27836-8\_84}
  {\path{doi:10.1007/978-3-540-27836-8\_84}}.

\bibitem{ms2018}
J.~Ian Munro and Corwin Sinnamon.
\newblock Time and space efficient representations of distributive lattices.
\newblock In Artur Czumaj, editor, {\em Proceedings of the Twenty-Ninth Annual
  {ACM-SIAM} Symposium on Discrete Algorithms, {SODA} 2018, New Orleans, LA,
  USA, January 7-10, 2018}, pages 550--567. {SIAM}, 2018.
\newblock \href {https://doi.org/10.1137/1.9781611975031.36}
  {\path{doi:10.1137/1.9781611975031.36}}.

\bibitem{MunroWu18}
J.~Ian Munro and Kaiyu Wu.
\newblock Succinct data structures for chordal graphs.
\newblock In {\em 29th International Symposium on Algorithms and Computation,
  {ISAAC} 2018, December 16-19, 2018, Jiaoxi, Yilan, Taiwan}, pages
  67:1--67:12, 2018.
\newblock \href {https://doi.org/10.4230/LIPIcs.ISAAC.2018.67}
  {\path{doi:10.4230/LIPIcs.ISAAC.2018.67}}.

\bibitem{Navarro2016}
Gonzalo Navarro.
\newblock {\em Compact Data Structures -- A practical approach}.
\newblock Cambridge University Press, 2016.

\bibitem{NavarroSadakane2014}
Gonzalo Navarro and Kunihiko Sadakane.
\newblock Fully functional static and dynamic succinct trees.
\newblock {\em {ACM} Transactions on Algorithms}, 10(3):1--39, may 2014.
\newblock \href {https://doi.org/10.1145/2601073} {\path{doi:10.1145/2601073}}.

\bibitem{Patrascu2008}
Mihai Patrascu.
\newblock Succincter.
\newblock In {\em Symposium on Foundations of Computer Science (FOCS)}. {IEEE},
  October 2008.
\newblock \href {https://doi.org/10.1109/focs.2008.83}
  {\path{doi:10.1109/focs.2008.83}}.

\bibitem{pr2014}
Mihai Patrascu and Liam Roditty.
\newblock Distance oracles beyond the thorup-zwick bound.
\newblock {\em {SIAM} J. Comput.}, 43(1):300--311, 2014.
\newblock \href {https://doi.org/10.1137/11084128X}
  {\path{doi:10.1137/11084128X}}.

\bibitem{rmr1992}
R.~Ravi, Madhav~V. Marathe, and C.~Pandu Rangan.
\newblock An optimal algorithm to solve the all-pair shortest path problem on
  interval graphs.
\newblock {\em Networks}, 22(1):21--35, 1992.
\newblock \href {https://doi.org/10.1002/net.3230220103}
  {\path{doi:10.1002/net.3230220103}}.

\bibitem{s2007}
Kunihiko Sadakane.
\newblock Compressed suffix trees with full functionality.
\newblock {\em Theory of Computing Systems}, 41(4):589--607, 2007.
\newblock \href {https://doi.org/10.1007/s00224-006-1198-x}
  {\path{doi:10.1007/s00224-006-1198-x}}.

\bibitem{snr2015}
Gaurav Singh, N.~S. Narayanaswamy, and G.~Ramakrishna.
\newblock Approximate distance oracle in o(n 2) time and o(n) space for chordal
  graphs.
\newblock In M.~Sohel Rahman and Etsuji Tomita, editors, {\em {WALCOM:}
  Algorithms and Computation - 9th International Workshop, {WALCOM} 2015,
  Dhaka, Bangladesh, February 26-28, 2015. Proceedings}, volume 8973 of {\em
  Lecture Notes in Computer Science}, pages 89--100. Springer, 2015.
\newblock \href {https://doi.org/10.1007/978-3-319-15612-5\_9}
  {\path{doi:10.1007/978-3-319-15612-5\_9}}.

\bibitem{Sommer2014}
Christian Sommer.
\newblock Shortest-path queries in static networks.
\newblock {\em {ACM} Computing Surveys}, 46(4):1--31, April 2014.
\newblock \href {https://doi.org/10.1145/2530531} {\path{doi:10.1145/2530531}}.

\bibitem{tz2005}
Mikkel Thorup and Uri Zwick.
\newblock Approximate distance oracles.
\newblock {\em J. {ACM}}, 52(1):1--24, 2005.
\newblock \href {https://doi.org/10.1145/1044731.1044732}
  {\path{doi:10.1145/1044731.1044732}}.

\bibitem{zsfgwkb1994}
Peisen Zhang, Eric~A. Schon, Stuart~G. Fischer, Eftihia Cayanis, Janie Weiss,
  Susan Kistler, and Philip~E. Bourne.
\newblock An algorithm based on graph theory for the assembly of contigs in
  physical mapping of {DNA}.
\newblock {\em Computer Applications in the Biosciences}, 10(3):309--317, 1994.
\newblock \href {https://doi.org/10.1093/bioinformatics/10.3.309}
  {\path{doi:10.1093/bioinformatics/10.3.309}}.

\bibitem{z2001}
Uri Zwick.
\newblock Exact and approximate distances in graphs - {A} survey.
\newblock In Friedhelm~Meyer auf~der Heide, editor, {\em Algorithms - {ESA}
  2001, 9th Annual European Symposium, Aarhus, Denmark, August 28-31, 2001,
  Proceedings}, volume 2161 of {\em Lecture Notes in Computer Science}, pages
  33--48. Springer, 2001.
\newblock \href {https://doi.org/10.1007/3-540-44676-1\_3}
  {\path{doi:10.1007/3-540-44676-1\_3}}.

\end{thebibliography}

\end{document}